\documentclass[english,3p,twocolumn,superscriptaddress,prl]{revtex4-2}
\synctex=1
\usepackage[toc,page]{appendix}
\usepackage[latin9]{inputenc}
\setcounter{secnumdepth}{3}
\usepackage{amsmath}
\usepackage{amssymb, enumerate}
\usepackage{color}
\usepackage{babel}
\usepackage{graphicx}
\usepackage{epstopdf}
\usepackage{float}
\usepackage{thmtools}
\usepackage{thm-restate}
\usepackage{hyperref}
\usepackage{cleveref}
\usepackage{enumerate}
\usepackage{lipsum}
\usepackage{fixmath}
\usepackage{dsfont}
\usepackage{tikz}
\usepackage{pifont}
\usepackage{amsmath,amsfonts,amsthm}
\usepackage{graphicx}

\usepackage{pifont}

\newcommand{\ket}[1]{ | \, #1 \rangle} \newcommand{\bra}[1]{ \langle #1 \, |} 
\newcommand{\proj}[1]{\ket{#1}\bra{#1}} 
\newcommand{\kb}[2]{\ket{#1}\bra{#2}}

 % mean value 
 %trace norm 
 % absolute value
%\newcommand{\be}{\begin{equation}} \newcommand{\ee}{\end{equation}}
%\newcommand{\ba}{\begin{aligned}} \newcommand{\ea}{\end{aligned}}

\DeclareMathOperator{\Tr}{Tr}

\DeclareMathOperator{\Tsn}{Tsn}
\DeclareMathOperator{\Boxes}{Boxes}

\DeclareMathOperator{\Flex}{Flex}
\DeclareMathOperator{\SWAP}{SWAP}
\DeclareMathOperator{\Op}{Op}
\DeclareMathOperator{\Spec}{Spec}
\DeclareMathOperator{\Eig}{Eig}
\DeclareMathOperator{\PSD}{PSD}
\DeclareMathOperator{\Dim}{Dim}
\DeclareMathOperator{\DPS}{DPS}

\usetikzlibrary{arrows.meta,bending,matrix,positioning}
\makeatletter

%%%%%%%%%%%%%%%%%%%%%%%%%%%%%% Textclass specific LaTeX commands.

\@ifundefined{textcolor}{}
{%
 \definecolor{BLACK}{gray}{0}
 \definecolor{WHITE}{gray}{1}
 \definecolor{RED}{rgb}{1,0,0}
 \definecolor{GREEN}{rgb}{0,1,0}
 \definecolor{BLUE}{rgb}{0,0,1}
 \definecolor{CYAN}{cmyk}{1,0,0,0}
 \definecolor{MAGENTA}{cmyk}{0,1,0,0}
 \definecolor{YELLOW}{cmyk}{0,0,1,0}
}

%%%%%%%%%%%%%%%%%%%%%%%%%%%%%% User specified LaTeX commands.

\newtheorem{theorem}{Theorem}

\newtheorem{remark}[theorem]{Remark}
\newtheorem{observation}[theorem]{Observation}
\newtheorem{lem}[theorem]{Lemma}
\newtheorem{definition}[theorem]{Definition}

\makeatother

\begin{document}

\title{Hybrid no-signaling-quantum correlations}
\author{Micha{\l} Banacki}
\affiliation{International Centre for Theory of Quantum Technologies, University of Gda\'{n}sk, Wita Stwosza 63, 80-308 Gda\'{n}sk, Poland}
\affiliation{Institute of Theoretical Physics and Astrophysics, Faculty of Mathematics, Physics and Informatics, University of Gda\'{n}sk, Wita Stwosza 57, 80-308 Gda\'{n}sk, Poland}
\author{Piotr Mironowicz}
\affiliation{International Centre for Theory of Quantum Technologies, University of Gda\'{n}sk, Wita Stwosza 63, 80-308 Gda\'{n}sk, Poland}
\affiliation{Department of Algorithms and System Modeling, Faculty of Electronics, Telecommunications and Informatics, Gda\'{n}sk University of Technology, Gabriela Narutowicza 11/12, 80-233 Gda\'{n}sk, Poland} 
\affiliation{Department of Physics, Stockholm University, S-10691 Stockholm, Sweden} 
\author{Ravishankar Ramanathan}
\affiliation{Department of Computer Science, The University of Hong Kong, Pokfulam Road, Hong Kong}
\author{Pawe{\l} Horodecki}
\affiliation{International Centre for Theory of Quantum Technologies, University of Gda\'{n}sk, Wita Stwosza 63, 80-308 Gda\'{n}sk, Poland}
\affiliation{Faculty of Applied Physics and Mathematics, Gda\'{n}sk University of Technology, Gabriela Narutowicza 11/12, 80-233 Gda\'{n}sk, Poland} 
%%%%%%%%%%%%%%%%%%%%%

\begin{abstract}
Fundamental investigations in non-locality have shown that while the no-signaling principle alone is not sufficient to single out the set of quantum non-local correlations, local quantum mechanics and no-signaling together exactly reproduce the set of quantum correlations in the two-party Bell scenario. Here, we introduce and study an intermediate hybrid no-signaling quantum set of non-local correlations that we term $\textbf{HNSQ}$ in the multi-party Bell scenario where some subsystems are locally quantum while the remaining subsystems are only constrained by the no-signaling principle. Specifically, the set $\textbf{HNSQ}$ is a super-quantum set of correlations derived from no-signaling assemblages by performing quantum measurements on the trusted subsystems. As a tool for optimization over the set $\textbf{HNSQ}$, we introduce an outer hierarchy of semi-definite programming approximations to the set following an approach put forward by Doherty-Parillo-Spedalieri. We then show that in contrast to the set $\textbf{NS}$ of no-signaling correlations, there exist extreme points of $\textbf{HNSQ}$ in the tripartite Bell scenario that admit quantum realization. We perform an extensive numerical analysis of the maximal violation of the facet Bell inequalities in the three-party binary input-output scenario and study the corresponding self-testing properties. In contrast to the usual no-signaling correlations, the new set allows for simple security proofs of (one-sided)-device-independent applications against super-quantum adversaries.
\end{abstract}

%%%%%%%%%%%%%%%%%%%%%

\keywords{Quantum steering, No-signaling assemblages, Post-quantum steering, No-signaling correlations, DPS hierarchy}

\maketitle

\noindent\textit{Introduction.-} Quantum non-local correlations violating Bell inequalities \cite{Bell,RMPBellnonlocality} are of great fundamental interest besides giving rise to the powerful application of device-independent (DI) cryptography \cite{Barrett2005, PironioNature, PRLAcin, IEEEKessler, BRGH+16, RBHH+16}. In investigations of quantum non-locality, it has been very fruitful to study the correlations from the outside by investigating general no-signaling correlations constrained only by the principle of no superluminal communication \cite{Rohlich-Popescu, HR19}. While the no-signaling principle alone is not sufficient to pick out the set of quantum non-local correlations, local quantum mechanics and no-signaling together exactly reconstruct the set of quantum correlations in the two-party Bell scenario \cite{BBBEW10, Unified}.

Non-locality considers a black box scenario where the parties perform measurements chosen as random classical inputs and obtain corresponding classical outcomes. A related notion of quantum steering \cite{S36,WJD07} considers a more refined scenario where one or more of the parties are considered to have full control of the quantum systems in their laboratory, so that the quantum states and measurements of their subsystems are fully characterized. In this paradigm which has recently gained interest from both fundamental and applied perspectives \cite{CS17, UCNG20}, instead of just the measurement statistics one considers \textit{quantum assemblages} consisting of subnormalized states describing conditional states of the characterized subsystem conditioned upon measurements and outcomes of separated untrusted parties. Similarly, as an analog of the no-signaling boxes, one considers here the \textit {no-signaling assemblages} that are only constrained by the no-signaling conditions \cite{SBCSV15}. In particular, the generalization of the bipartite steering scenario to the multipartite case (with several untrusted systems) \cite{CS15} has lead to the possibility of post-quantum steering, i.e., of no-signaling assemblages that do not admit quantum realization \cite{SBCSV15, SAPHS18, HS18, SHSA20}.

While the study of no-signaling correlations has lead to a deep information-theoretic understanding of quantum non-locality \cite{AQ, IC}, several fundamental no-go theorems have also been discovered for proving device-independent security against adversaries constrained only by the no-signaling principle \cite{RTHHPRL, RA12}. Thus, from both a fundamental and applied perspective, it is of great interest to study the set of Bell correlations in scenarios where some subsystems obey quantum mechanics locally while the remaining subsystems are constrained only by the no-signaling principle. In this paper, we study this set of hybrid no-signaling-quantum correlations showing several interesting properties of this set that lend themselves naturally to (one-sided)-device-independent applications. 

The paper is organized as follows. We first introduce the set of hybrid no-signaling-quantum $\textbf{HNSQ}$ correlations defining these as the set of correlations obtainable by performing quantum measurements on the characterized subsystems of general no-signaling assemblages. After showing how $\textbf{HNSQ}$ fits in between the set of quantum non-local correlations $\textbf{Q}$ and the set of general no-signaling $\textbf{NS}$ correlations, we study interesting properties of the set. In particular, we show that in contrast to the usual $\textbf{NS}$ correlations, there exist extremal points of $\textbf{HNSQ}$ that admit a quantum realization. Furthermore, we show that some of these extreme points serve as self-testing certificates for boxes. Crucially, these results hold in a three-party setting where quantum assemblages
have been shown to be a strict subset of the set of no-signaling assemblages.
These surprising results, in view of the unrealisability of super-quantum boxes such as the PR box in non-locality \cite{RMPBellnonlocality, RMPBuhrman}, should have interesting consequences both in quantum foundations and in the development of one-sided-device-independent cryptography secure against super-quantum adversaries.

As a tool for optimization over the set $\textbf{HNSQ}$, we introduce an outer hierarchy of semi-definite programming approximations in Appendix G \cite{Supp} to the set following an approach put forward by Doherty-Parillo-Spedalieri in \cite{DPS, DPS2}. We recapitulate their results in Appendix E \cite{Supp}, and in Appendix F \cite{Supp} we provide a generalization of the DPS method for optimization over multi-partite state that are normalized but allow for negative eigenvalues. We use this tool to perform an extensive numerical analysis of the maximal violation of the facet Bell inequalities in the three-party binary input-output scenario \cite{Sliwa} and study the corresponding self-testing properties. \newline

\noindent\textit{Defining Q-NS correlations from No-Signaling Assemblages.-} 
We begin with a scenario, usually associated with quantum steering \cite{S36,WJD07}, in which a joint system consisting of a local (trusted) $d$ dimensional quantum subsystem $B$ together with $n$ distant untrusted subsystems $A_i$, is described by some theory (possibly beyond quantum mechanics) fulfilling no-signaling constrains. Assume that on each untrusted subsystem $A_i$ one can performed random measurements labeled by settings $x_i\in \left\{0,\ldots m_i-1\right\}$, and related outcomes $a_i\in \left\{0,\ldots k_i\right\}$, of cardinality $m_i$ and $k_i$, respectively, for $i=1,\ldots, n$. We will write $\textbf{a}_n=(a_1,\ldots, a_n)$ and $\textbf{x}_n=(x_1,\ldots, x_n)$ for strings of $n$ chosen settings and outcomes. In what follows we will introduce the simplified notation $k_i=k, m_i=m$ for all $i$, as it will not affect the generality of future arguments. When the described measurements are performed on the untrusted subsystems, the conditional state of the trusted subsystem $B$ is fully characterized by the notion of a no-signaling assemblage.

Specifically, a \textit{no-signaling assemblage} $\Sigma=\left\{\sigma_{\textbf{a}_n|\textbf{x}_n}\right\}$ is a collection of subnormalized states $\sigma_{\textbf{a}_n|\textbf{x}_n}$ acting on $d$-dimensional Hilbert space, for which there is a state $\sigma_B$ such that
\begin{equation}\label{def11}
\forall_{x_1,\ldots, x_n} \sum_{a_1,\ldots, a_n} \sigma_{\textbf{a}_n|\textbf{x}_n}=\sigma_B,
\end{equation}and for any possible subset of indexes $I=\left\{i_1,\ldots, i_s\right\}\subset \left\{1,\ldots, n\right\}$ with $1\leq s<n$, there exists a positive operator $\sigma_{a_{i_1}\ldots a_{i_s}|x_{i_1}\ldots x_{i_s}}$ such that the following no-signaling constraints hold:
\begin{equation}\label{def12}
\forall_{x_1,\ldots, x_n} \sum_{a_j,j\notin I} \sigma_{\textbf{a}_n|\textbf{x}_n}=\sigma_{a_{i_1}\ldots a_{i_s}|x_{i_1}\ldots x_{i_s}}.
\end{equation}
The convex set of all no-signaling assemblages (for a fixed dimension $d$ and fixed scenario $(n,m,k)$) will be denoted by $\textbf{nsA}(n,m,k,d)$. Note that no-signaling assemblage may have a quantum realization 
\begin{equation}\label{assemblage}
\sigma_{\textbf{a}_n|\textbf{x}_n}=\mathrm{Tr}_{A_1,\ldots, A_n}(M^{(1)}_{a_1|x_1}\otimes\ldots \otimes M^{(n)}_{a_n|x_n}\otimes \mathds{1}\rho),
\end{equation}
where $\rho\in \bigotimes_{i=1}^n B(H_{A_i})\otimes B(H_B)$ is some quantum state and all $M^{(i)}_{a_i|x_i}$ are elements of POVMs corresponding to appropriate measurement outcomes and settings. It is well-known \cite{G89,HJW93} that any no-signaling assemblage for $n=1$ admits a quantum realization of the form (\ref{assemblage}). In contrast, the $n>1$ scenario provides a room for post-quantum steering, i.e. quantum assemblages form a nontrivial convex subset $\textbf{qA}(n,m,k,d)$ of $\textbf{nsA}(n,m,k,d)$ \cite{SBCSV15}.\newline

On the other hand, we show in Theorem \ref{uni} that any no-signaling assemblage admits a realization analogous to the one given by (\ref{assemblage}), as long as one relaxes the positivity constraint for $\rho$.%\textbf{Piotr's note: I don't understand, what is $d$, and what is $d_B$, please clarify. - \color{red}$d$ and $d_B$ stand for dimensions of Hilbert spaces, depending on the context - in the following theorem it is stated that we consider assemblage acting on $d_B$ dimensional space and $d$ describes local dimension of subsystems on which we perform measurements\color{black}}

\begin{theorem}\label{uni}Collection of positive operators $\Sigma=\left\{\sigma_{\textbf{a}_n|\textbf{x}_n}\right\}$ acting on $d_B$ dimensional space defines a no-signaling assemblage if and only if there exists a Hermitian operator $W\in \left[ \bigotimes_{i=1}^{n}B(\mathbb{C}^d) \right]\otimes B(\mathbb{C}^{d_B})$ of a unit trace, and POVMs elements $M^{(i)}_{a_i|x_i}\in B(\mathbb{C}^d)$, such that:
\begin{equation}
	\sigma_{\textbf{a}_n|\textbf{x}_n}=\mathrm{Tr}_{A_1,\ldots, A_n}(M^{(1)}_{a_1|x_1}\otimes\ldots \otimes M^{(n)}_{a_n|x_n}\otimes \mathds{1}W).
\end{equation}
Moreover, for any $\Sigma\in \textbf{nsA}(n,m,k,d_B)$ one can always chose a representation with the same choice of POVMs elements acting on $d$-dimensional space with $d=\max (m,k)$.
\end{theorem}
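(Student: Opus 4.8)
The plan is to recast the statement as a surjectivity property of a single linear map and to reduce the multipartite case to a single-party tomographic-completeness fact. Throughout, I would treat the families $\{\sigma_{\mathbf{a}_n|\mathbf{x}_n}\}$ and the operator $W$ as elements of real vector spaces of Hermitian operators, since positivity of the $\sigma$'s is assumed on both sides of the equivalence and is simply carried along by the construction.

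First I would dispatch the \emph{if} direction, which is a direct computation. Given the claimed representation with genuine POVMs, summing the outcomes $a_j$ for $j\notin I$ replaces each factor $M^{(j)}_{a_j|x_j}$ by $\sum_{a_j} M^{(j)}_{a_j|x_j}=\mathds{1}$, so the marginal becomes independent of the corresponding settings $x_j$; this is exactly the no-signaling constraint (\ref{def12}), and the same computation with $I=\emptyset$ yields (\ref{def11}) with $\sigma_B=\Tr_{A_1\ldots A_n}W$. Positivity of the marginals follows because partial sums of positive operators are positive, and $\Tr\sigma_B=\Tr W=1$.

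For the \emph{only if} direction I would introduce the linear map $\Phi$ sending a Hermitian $W$ to the family $\sigma_{\mathbf{a}_n|\mathbf{x}_n}=\Tr_{A_1\ldots A_n}((\bigotimes_i M^{(i)}_{a_i|x_i})\otimes\mathds{1}\,W)$ and aim to show it is onto the space $\mathbf{NS}$ of Hermitian families obeying (\ref{def11})--(\ref{def12}). The computation above shows the image is contained in $\mathbf{NS}$, so it suffices to prove surjectivity. I would first treat a single untrusted party. Writing $\Phi=T\otimes\mathrm{id}_B$, where $T(E)=(\Tr(M_{a|x}E))_{a,x}$ is the measurement map on the untrusted side, a duality argument identifies the image of $T$ with the annihilator of the linear relations among the operators $\{M_{a|x}\}$. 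When the only such relations are the completeness relations $\sum_a M_{a|x}=\mathds{1}$, this annihilator is precisely $\{c_{a|x}:\sum_a c_{a|x}\text{ is independent of }x\}$, the single-party no-signaling space; tensoring with the free $B$-slot gives $\Phi$ onto the single-party $\mathbf{NS}$.

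The multipartite case then follows by a tensor-product argument: the measurement map for $n$ parties factorizes as $\bigotimes_i T_i$, its image is $\bigotimes_i\mathrm{Im}(T_i)$, and since no-signaling across all subsets is equivalent to setting-independence of each single-party marginal, this tensor product is exactly $\mathbf{NS}$. Surjectivity of $\Phi$ then provides a Hermitian $W$ reproducing any given assemblage, and its trace is fixed automatically: $\Tr W=\Tr(\Tr_{A_1\ldots A_n}W)=\Tr\sigma_B=1$. The \textbf{main obstacle}, and the source of the dimension bound in the \emph{moreover} clause, is the single-party input to this scheme: one must exhibit, on $\mathbb{C}^d$ with $d=\max(m,k)$, genuine POVMs whose $mk$ elements span the maximal possible $[m(k-1)+1]$-dimensional space of Hermitian operators, so that their only relations are the completeness ones. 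A dimension count shows $d^2\ge m(k-1)+1$ under $d=\max(m,k)$, leaving room for such a configuration; the work is to construct it explicitly (or to argue genericity within the POVM constraints), which I expect to be the technically delicate step.
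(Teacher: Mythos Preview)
Your approach is correct and takes a genuinely different route from the paper. The paper proceeds by an \emph{explicit} construction: it chooses rank-one POVM elements $M^{(i)}_{a|x}=z_{a|x}\proj{\phi_{a|x}}$ so that $\{\mathds{1}\}\cup\{M^{(i)}_{a|x}:a\le k-2\}$ is linearly independent (citing \cite{Unified} for the dimension $d=\max(m,k)$), builds the Hilbert--Schmidt dual family $\{\tilde{\mathds{1}},\tilde M^{(i)}_{a|x}\}$, and then writes $W$ down as a concrete sum over all subsets $I\subseteq\{1,\dots,n\}$ of tensor products of dual elements against the assemblage marginals $\sigma_I$. You instead establish surjectivity of the joint measurement map abstractly, via the factorization $T^{(n)}=\bigotimes_i T_i$ and the identification of each $\mathrm{Im}(T_i)$ with the single-party no-signaling subspace through the annihilator-of-relations argument. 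Both routes rest on the same single-party input---POVMs on $\mathbb{C}^{\max(m,k)}$ whose only linear relation is completeness---which the paper imports from \cite{Unified} and you correctly flag as the residual technical step. The paper's explicit formula has the concrete payoff that the resulting $W$ carries a computable operator-norm bound (exploited in Appendix~\ref{appB} and essential for the SDP hierarchy of Appendices~\ref{sec:AppB}--\ref{sec:app1q2ns}); your surjectivity argument is conceptually cleaner and makes the tensor structure transparent, but would not by itself furnish that norm control. One minor imprecision: your justification that $\bigotimes_i V_i$ equals the multipartite no-signaling space via ``setting-independence of each single-party marginal'' is not quite the right statement---the clean argument is that product elements manifestly satisfy all the constraints~(\ref{def11})--(\ref{def12}), and then the dimension count $\dim\mathbf{NS}=(m(k-1)+1)^n=\prod_i\dim V_i$ forces equality.
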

\begin{proof}See Appendix A in \cite{Supp}.
\end{proof}

Note that a similar observation has been made in \cite{SAPHS18}. However, that construction (based on the so-called pseudo-LHS model \cite{SAPHS18}) requires that the local dimensions $d$ fulfill $d=mk+1$, and as such it is less useful from the operational perceptive. 

Moreover, the presented theorem provides that by fixing the choice of POVMs (see construction of $W$, Appendix B in \cite{Supp}), one can restrict the choice of a Hermitian operator by imposing an upper bound of its operator norm, which ensures another commodity for numerical analysis.\newline

\noindent\textit{Hybrid no-signaling-quantum correlations.-} Consider now a Bell-type scenario with $n$ separated (untrusted) systems as above. The scenario is described by measurements $x_i\in \left\{0,\ldots m_i-1\right\}$ and outcomes $a_i\in \left\{0,\ldots k_i\right\}$ for each $i=1,\ldots, n$. By $\textbf{NS}(n,m,k)$ (here we once more use a simplified notation with single $m$ and $k$) we denote the convex set of all no-signaling boxes $P=\left\{p(\textbf{a}_n|\textbf{x}_n)\right\}$ that may describe this experimental setup. No-signaling correlations which admit a quantum realization given by $p(\textbf{a}_n|\textbf{x}_n)=\mathrm{Tr}(M^{(1)}_{a_1|x_1}\otimes\ldots \otimes M^{(n)}_{a_n|x_n}\rho)$ form a convex subset denoted as $\textbf{Q}(n,m,k)$. Finally, the set of local boxes $\textbf{LOC}(n,m,k)$ is defined as a convex hull of deterministic correlations in $(n,m,k)$ scenario. It is well known, that $\textbf{LOC}(n,m,k)\subsetneq \textbf{Q}(n,m,k) \subsetneq \textbf{NS}(n,m,k)$.

Now, we observe that the possibility of post-quantum steering (existence of post-quantum no-signaling assemblages) discussed in the previous part enables us to consider a new set of hybrid quantum no-signaling Bell correlations $\textbf{HNSQ}(n_{ns}+n_q,m,k)$ obtained from no-signaling assemblages by performing quantum measurements on one trusted subsystem.

\begin{definition}\label{NSA} 
The set of hybrid quantum no-signaling correlations $\textbf{HNSQ}(n_{ns}+n_q,m,k)$ is defined as a set of all no-signaling boxes $P=\left\{p(\textbf{a}_{n_{ns}+n_q}|\textbf{x}_{n_{ns}+n_q})\right\}$ such that $p(\textbf{a}_{n_{ns}+n_q}|\textbf{x}_{n_{ns}+n_q})=\mathrm{Tr}(M^{(1)}_{a_{1}|x_{1}}\otimes\ldots \otimes M^{(n_q)}_{a_{n_q}|x_{n_q}}\sigma_{\textbf{a}_{ns}|\textbf{x}_{ns}})$ for some $\Sigma=\left\{\sigma_{\textbf{a}_{ns}|\textbf{x}_{ns}}\right\}\in \textbf{nsA}(n_{ns},m,k,\prod_i^{n_q}d_i)$ where $M^{(i)}_{a_{i}|x_{i}}$ are elements of some POVMs acting on $d_i$-dimensional space respectively and $d_1,\ldots, d_{n_q}$ are arbitrary.
\end{definition} 

For the sake of simplicity we will not consider the above definition in full generality. In the remaining part of the paper we restrict our attention to the case with only single trusted (and quantum) subsystem and related set of hybrid correlation $\textbf{HNSQ}(n+1,m,k)$.

We first observe that $\textbf{HNSQ}(n+1,m,k)$ is convex. Indeed, let $P_1,P_2\in \textbf{HNSQ}(n+1,m,k)$ and choose any $q\in (0,1)$. By Definition \ref{NSA} we may put $p^{(i)}(\textbf{a}_{n+1}|\textbf{x}_{n+1})=\mathrm{Tr}(M^{(i)}_{a_{n+1}|x_{n+1}}\sigma^{(i)}_{\textbf{a}_{n}|\textbf{x}_{n}})$ for $i=1,2$, where $\Sigma_i=\left\{\sigma^{(i)}_{\textbf{a}_{n}|\textbf{x}_{n}}\right\}\in \textbf{nsA}(n,m,k,d_i)$. Consider $P=qP_1+(1-q)P_2$. Observe that 
\begin{equation}\nonumber
p(\textbf{a}_{n+1}|\textbf{x}_{n+1})=\mathrm{Tr}(N_{a_{n+1}|x_{n+1}}\sigma_{\textbf{a}_{n}|\textbf{x}_{n}})
\end{equation}where $N_{a_{n+1}|x_{n+1}}=M^{(1)}_{a_{n+1}|x_{n+1}}\oplus M^{(2)}_{a_{n+1}|x_{n+1}}\in B(\mathbb{C}^{d_1}\oplus \mathbb{C}^{d_2})$ are elements of some new POVMs and $\sigma_{\textbf{a}_{n}|\textbf{x}_{n}}=q\sigma^{(1)}_{\textbf{a}_{n}|\textbf{x}_{n}}\oplus (1-q)\sigma^{(2)}_{\textbf{a}_{n}|\textbf{x}_{n}}\in B(\mathbb{C}^{d_1}\oplus \mathbb{C}^{d_2})$. Obviously $\Sigma=\left\{\sigma_{\textbf{a}_{n}|\textbf{x}_{n}}\right\}\in \textbf{nsA}(n,m,k,d_1+d_2)$, so $P\in \textbf{HNSQ}(n+1,m,k)$.

From the characterization of $\textbf{nsA}(n,m,k,d)$ and $\textbf{qA}(n,m,k,d)$ we immediately have $\textbf{Q}(n+1,m,k)\subset \textbf{HNSQ}(n+1,m,k) \subset \textbf{NS}(n+1,m,k)$. The following theorems ensure that in fact all the above inclusions are strict, so $\textbf{HNSQ}(n+1,m,k)$ can be seen as a nontrivial intermediate theory of no-signaling boxes giving rise to post-quantum resources (see also Fig. \ref{fig_1}).

\begin{theorem}$\textbf{HNSQ}(n+1,m,k)\subsetneq  \textbf{NS}(n+1,m,k)$ for all $n,m,k$. Moreover, $\textbf{Q}(n+1,m,k)\subsetneq \textbf{HNSQ}(n+1,m,k)  $ for all $(n,m,k)$ with $n\geq 2$.
\end{theorem}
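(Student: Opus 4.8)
The plan is to prove the two strict inclusions separately by exhibiting explicit witnesses, and to pin down the geometric reason why the second requires $n\ge 2$. Throughout I assume the non-degenerate regime $m,k\ge 2$ (otherwise all the sets collapse). For the inclusion $\textbf{HNSQ}(n+1,m,k)\subsetneq \textbf{NS}(n+1,m,k)$, the key structural observation is that every box in $\textbf{HNSQ}$ has a \emph{quantum} bipartite marginal between the trusted party $n+1$ and any single untrusted party. Indeed, starting from $p(\textbf{a}_{n+1}|\textbf{x}_{n+1})=\mathrm{Tr}(M_{a_{n+1}|x_{n+1}}\sigma_{\textbf{a}_{n}|\textbf{x}_{n}})$ with $\Sigma\in\textbf{nsA}(n,m,k,d)$ and summing over $a_2,\dots,a_n$, the assemblage no-signaling condition (\ref{def12}) with $I=\{1\}$ collapses $\sum_{a_2,\dots,a_n}\sigma_{\textbf{a}_{n}|\textbf{x}_{n}}$ to the single-party assemblage $\sigma_{a_1|x_1}$, which by the GHJW theorem \cite{G89,HJW93} has a quantum realization $\sigma_{a_1|x_1}=\mathrm{Tr}_{A_1}(M^{(1)}_{a_1|x_1}\otimes\mathds{1}\,\tau)$. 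Hence the $(1,n+1)$-marginal equals $\mathrm{Tr}(M^{(1)}_{a_1|x_1}\otimes M_{a_{n+1}|x_{n+1}}\tau)\in\textbf{Q}(2,m,k)$.

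I would then violate this property with an explicit no-signaling box. Let parties $1$ and $n+1$ share a Popescu-Rohrlich box on two of their inputs and outputs, while parties $2,\dots,n$ output uniformly at random: $p(\textbf{a}_{n+1}|\textbf{x}_{n+1})=p_{PR}(a_1,a_{n+1}|x_1,x_{n+1})\prod_{j=2}^{n}\tfrac{1}{k}$. This is manifestly no-signaling, but its $(1,n+1)$-marginal is the PR box, which exceeds Tsirelson's bound and therefore lies outside $\textbf{Q}(2,m,k)$; by the marginal observation above it cannot belong to $\textbf{HNSQ}$. The argument is uniform in $n\ge 1$, reducing for $n=1$ to the standard fact $\textbf{Q}(2,m,k)\subsetneq\textbf{NS}(2,m,k)$.

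For $\textbf{Q}(n+1,m,k)\subsetneq\textbf{HNSQ}(n+1,m,k)$ with $n\ge 2$, I would produce a single super-quantum box inside $\textbf{HNSQ}$. The construction uses that, in contrast to the $n=1$ case where GHJW forces every single-untrusted assemblage to be quantum and hence $\textbf{HNSQ}(2,m,k)=\textbf{Q}(2,m,k)$, for $n\ge 2$ even a one-dimensional trusted system supports post-quantum assemblages: on $\mathbb{C}^1$ the operators are scalars, so $\textbf{nsA}(n,m,k,1)$ is precisely the set of no-signaling boxes on the untrusted parties, which can be super-quantum. Concretely, for $n=2$ set the trusted dimension to $1$, take $\sigma_{a_1 a_2|x_1 x_2}=p_{PR}(a_1,a_2|x_1,x_2)$, and let the trusted party apply the forced trivial measurement $M_{0|x_3}=1$, $M_{a_3|x_3}=0$ for $a_3\neq 0$. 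By Definition \ref{NSA} the resulting box $p(a_1,a_2,a_3|x_1,x_2,x_3)=p_{PR}(a_1,a_2|x_1,x_2)\,\delta_{a_3,0}$ lies in $\textbf{HNSQ}(3,m,k)$, while its $(1,2)$-marginal is a PR box, so it is not in $\textbf{Q}(3,m,k)$. Padding with trivial untrusted parties settles all $n\ge 2$.

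I expect the main difficulty to be conceptual rather than computational. In the first part one has to identify the right invariant that survives the quantum measurement on the trusted party, namely the quantum-ness of the trusted-party-inclusive bipartite marginals, which is exactly what couples the assemblage no-signaling condition to GHJW. In the second part the subtlety is that measuring an arbitrary post-quantum assemblage need not yield a box outside $\textbf{Q}$, so the post-quantumness must be made to surface in a bipartite marginal among the \emph{untrusted} parties; this is also what explains the hypothesis $n\ge 2$, since for $n=1$ the lone untrusted party could only be super-quantumly correlated with the trusted party, a configuration already excluded by the marginal observation of the first part.
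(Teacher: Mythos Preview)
Your proof is correct and follows essentially the same route as the paper's. Both arguments hinge on the same two structural facts: for the first inclusion, that the bipartite marginal between the trusted party and any single untrusted party of an $\textbf{HNSQ}$ box must be quantum (via GHJW applied to the single-party reduced assemblage), which a PR-type box placed on that pair violates; for the second inclusion, that one can load an arbitrary super-quantum no-signaling box onto the $n\ge 2$ untrusted parties while the trusted party acts trivially (your one-dimensional assemblage trick makes this explicit, whereas the paper leaves it implicit and phrases the witness as any non-local extremal point of $\textbf{NS}(n,m,k)$, invoking \cite{RTHHPRL} to conclude non-quantumness).
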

\begin{proof}See Appendix C in \cite{Supp}.
\end{proof}

\begin{figure}[H]
\includegraphics[width=0.45\textwidth]{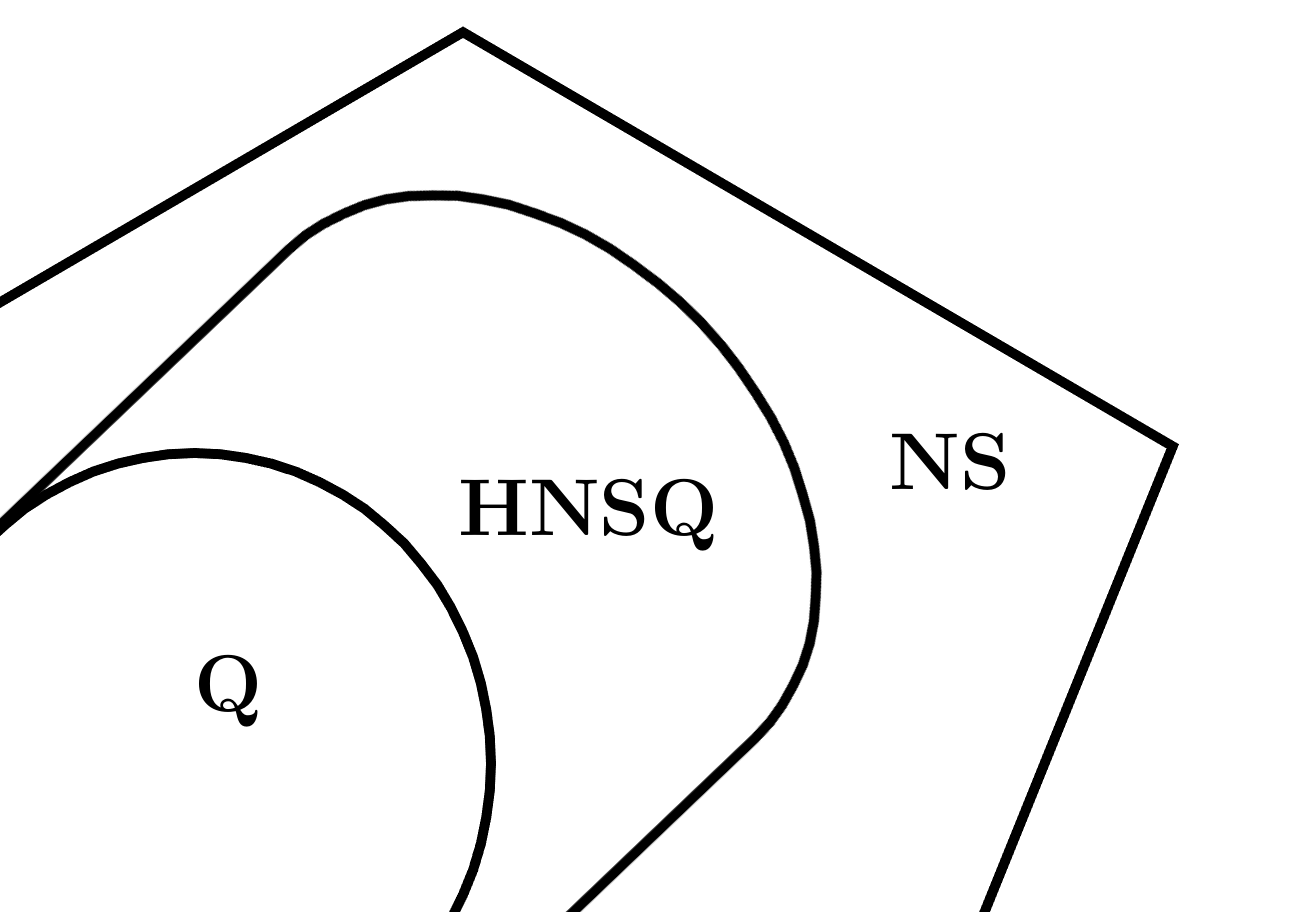}
\caption{\label{fig_1}Schematic relationship between $\textbf{Q},\textbf{HNSQ}$ and $\textbf{NS}$.}
\end{figure}

\begin{remark}\label{rem}Note that due to Jordan's lemma \cite{Jordan}, we have $\textbf{HNSQ}(n+1,2,2)=\mathrm{conv}(\textbf{NSA}_2(n+1,2,2))$, where $\textbf{NSA}_2(n+1,2,2)$ denotes a set of correlation obtained by projective measurements on assemblages acting only on a $2$-dimensional Hilbert space.
\end{remark}

\noindent\textit{Quantum realization of extreme points.-} It has been established in \cite{RTHHPRL} that it is not possible to have a quantum realization of a non-local (non-classical) extreme point in the set of no-signaling correlations (regardless of the numbers of parties, settings, and outcomes). Recently, the analogous question has been found to have a positive answer in the simplest nontrivial steering scenario with no-signaling assemblages \cite{RBRH}, that is, there are examples of quantum assemblages which are extremal in $\textbf{nsA}(2,2,2,d)$ and non-classical (where classicality is defined as admitting a local hidden state model \cite{SAPHS18}).

Considering the case of the intermediate set of hybrid no-signaling-quantum correlations $\textbf{HNSQ}(n+1,m,k)$, it is natural to pose the analogous question - is it possible to obtain a quantum realization of a non-local extreme point in $\textbf{HNSQ}(n+1,m,k)$? The following theorem shows a surprising positive solution for this problem.

\begin{theorem}\label{thm_ext}Let $P\in \textbf{Q}(3,m,k)$ be given by
\begin{equation}\label{form_of}
p(abc|xyz)=\mathrm{Tr}_{ABC}(P_{a|x}\otimes Q_{b|y}\otimes R_{c|z}|0_A\rangle \langle 0_A|\otimes|\psi_{BC}\rangle \langle \psi_{BC}|)
\end{equation}with projections $P_{f(x)|x}=|0\rangle \langle 0|$ defined by some fixed function $f(x)\in \left\{0,\ldots, k-1\right\}$ and where projections $Q_{b|y},R_{c|z}$ together with a state $|\psi_{BC}\rangle \langle \psi_{BC}|$ give a rise to the maximal quantum violation of a given bipartite Bell inequality $\mathcal{B}$ that is self-testable in the set of quantum correlations. Then $P$ is extremal in $\textbf{HNSQ}(2+1,m,k)$
\end{theorem}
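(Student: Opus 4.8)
The plan is to reduce the extremality of the tripartite box $P$ to the extremality of its $(B,C)$-marginal inside $\textbf{Q}(2,m,k)$, exploiting that the trusted party measures an assemblage whose untrusted $A$-part is deterministic. First I would rewrite $P$ explicitly. Since $\{P_{a|x}\}_a$ is a POVM and $P_{f(x)|x}=\ket{0}\bra{0}$, the relation $\sum_a P_{a|x}=\mathds{1}$ together with $P_{a|x}\geq 0$ forces $\langle 0_A|P_{a|x}|0_A\rangle=\delta_{a,f(x)}$, so that
\[
p(abc|xyz)=\delta_{a,f(x)}\,p_{BC}(bc|yz),\qquad p_{BC}(bc|yz)=\mathrm{Tr}\!\left(Q_{b|y}\otimes R_{c|z}\,\ket{\psi_{BC}}\bra{\psi_{BC}}\right),
\]
where $p_{BC}$ is the self-testing correlation maximally violating $\mathcal{B}$. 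Thus $A$ outputs $f(x)$ deterministically and factorizes out. Note $P\in\textbf{Q}(3,m,k)\subset\textbf{HNSQ}(2+1,m,k)$, so the statement is meaningful.

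Next I would take an arbitrary convex decomposition $P=qP_1+(1-q)P_2$ with $q\in(0,1)$ and $P_1,P_2\in\textbf{HNSQ}(2+1,m,k)$, the goal being $P_1=P_2=P$. Taking $C$ as the trusted party, each $P_i$ admits $p_i(abc|xyz)=\mathrm{Tr}(M^{(i)}_{c|z}\,\sigma^{(i)}_{ab|xy})$ with $\{\sigma^{(i)}_{ab|xy}\}\in\textbf{nsA}(2,m,k,d_i)$ and POVMs $\{M^{(i)}_{c|z}\}$. Because $p(abc|xyz)=0$ whenever $a\neq f(x)$ while $p_1,p_2\geq 0$, both summands must vanish on these events; hence $A$ is deterministic also in $P_1$ and $P_2$, and it remains only to pin down their $(B,C)$-marginals.

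The crux is that the $(B,C)$-marginal of \emph{any} $\textbf{HNSQ}$ box is automatically quantum. Summing over $a$ and invoking the no-signaling constraint (\ref{def12}) that eliminates the untrusted party $A$, the reduced family $\sigma^{(i)}_{b|y}=\sum_a\sigma^{(i)}_{ab|xy}$ is an $x$-independent one-party assemblage in $\textbf{nsA}(1,m,k,d_i)$. By the result of \cite{G89,HJW93} every single-party assemblage is quantum, so $\sigma^{(i)}_{b|y}=\mathrm{Tr}_{B'}(\tilde Q^{(i)}_{b|y}\otimes\mathds{1}\,\rho^{(i)})$ for some state $\rho^{(i)}$ and POVMs $\{\tilde Q^{(i)}_{b|y}\}$, whence
\[
p_i(bc|yz)=\mathrm{Tr}\!\left(M^{(i)}_{c|z}\,\sigma^{(i)}_{b|y}\right)=\mathrm{Tr}\!\left(\tilde Q^{(i)}_{b|y}\otimes M^{(i)}_{c|z}\,\rho^{(i)}\right)\in\textbf{Q}(2,m,k).
\]
Then $p_{BC}=q\,p_1(bc|yz)+(1-q)\,p_2(bc|yz)$ is a convex combination within $\textbf{Q}(2,m,k)$; since $\mathcal{B}$ is self-testable, $p_{BC}$ is its unique quantum maximizer, hence an exposed and therefore extremal point of $\textbf{Q}(2,m,k)$, forcing $p_1(bc|yz)=p_2(bc|yz)=p_{BC}(bc|yz)$.

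Finally I would recombine the two ingredients: for each $i$ the box $p_i$ vanishes for $a\neq f(x)$ and marginalizes to $p_{BC}$ over $a$, so $p_i(abc|xyz)=\delta_{a,f(x)}\,p_{BC}(bc|yz)=p(abc|xyz)$, giving $P_1=P_2=P$ and establishing extremality. I expect the main obstacle to be exactly this middle step: one must notice that tracing out the deterministic untrusted party $A$ collapses the bipartite assemblage to a single-party one, which \cite{G89,HJW93} guarantees is quantum, so the extra freedom of $\textbf{HNSQ}$ cannot manifest on the $(B,C)$ marginal and the self-testing rigidity of $\mathcal{B}$ can be brought to bear; a secondary point to spell out is that self-testability yields a unique quantum maximizer and hence extremality in $\textbf{Q}(2,m,k)$.
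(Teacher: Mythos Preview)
Your proposal is correct and follows essentially the same route as the paper's proof in Appendix~D: factor out the deterministic $A$-part, reduce each component's $(B,C)$-marginal to a quantum box via the GHJW theorem applied to the single-party assemblage $\{\sigma^{(i)}_{b|y}\}$, and then use the uniqueness of the quantum maximizer of $\mathcal{B}$ to force all components equal. The only cosmetic difference is that the paper establishes the determinism at the assemblage level (deducing $\sigma^{(i)}_{ab|xy}=\delta_{a,f(x)}\sigma^{(i)}_{b|y}$ from $\mathrm{Tr}\,\sigma^{(i)}_{ab|xy}=0$ for $a\neq f(x)$) and invokes Naimark dilation before appealing to self-testing, whereas you argue directly at the probability level and phrase the final step via exposedness/extremality of $p_{BC}$ in $\textbf{Q}(2,m,k)$; both variants yield the same conclusion.
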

\begin{proof}For detailed proof see Appendix D in \cite{Supp}.
\end{proof}

For a given Bell inequality $\mathcal{B}$ let $\Boxes_{\textbf{T}}(\mathcal{B}, B)$ denote the set of all boxes of the theory $\textbf{T}$ (understood as a convex set of all allowed correlations) for which the value of the Bell operator $\mathcal{B}$ attains at least the value $B$. Define Tsirelson's bound as $\Tsn_{\textbf{T}}(\mathcal{B})\equiv\max_{P \in \textbf{T}} \mathcal{B}(P)$. Those terms allows as to introduce a new form of self-testing, \textit{viz.} self-testing for boxes:
\begin{definition}
	A Bell operator $\mathcal{B}$ is self-tested for boxes within the theory $\textbf{T}$ if and only if the set $\Boxes_{\textbf{T}}(\mathcal{B}, \Tsn_{\textbf{T}}(\mathcal{B}))$ has cardinality $1$, i.e. when the box attaining the Tsirelson's bound is unique.
\end{definition}
In our considerations we will use the following lemma for tripartite boxes with two settings and outcomes (generalization for more parties, settings and outcomes is straightforward):
\begin{lem}
	For given theory $\textbf{T}$, Bell operator $\mathcal{B}$ and $B \in \mathbb{R}$ let:
	\begin{equation}
		\Flex_{\textbf{T}}(\mathcal{B}, B) \equiv \frac{1}{8} \sum_{a,b,c,x,y,z} \max_{p \in \Boxes_{\textbf{T}}(\mathcal{B}, B)}p(a,b,c|x,y,z).
	\end{equation}
	Then, it holds that $\mathcal{B}$ is self-tested for boxes in $\textbf{T}$ if and only if $\Flex_{\textbf{T}}(\mathcal{B}, \Tsn_{\textbf{T}}(\mathcal{B}))=1$.
\end{lem}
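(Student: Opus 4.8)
The plan is to prove the equivalent quantitative statement that $\Flex_{\textbf{T}}(\mathcal{B}, B) \geq 1$ for every $B$, with equality if and only if the set $\Boxes_{\textbf{T}}(\mathcal{B}, B)$ is a singleton; specializing to $B = \Tsn_{\textbf{T}}(\mathcal{B})$ then yields the lemma. Write $S = \Boxes_{\textbf{T}}(\mathcal{B}, B)$ and, for each tuple $(a,b,c,x,y,z)$, let $M(a,b,c,x,y,z) = \max_{p \in S} p(a,b,c|x,y,z)$ denote the entrywise maximum appearing in the definition of $\Flex$. Since $\textbf{T}$ is convex and compact and $\mathcal{B}$ is continuous, the value $\Tsn_{\textbf{T}}(\mathcal{B})$ is attained, so $S$ is nonempty and each of these maxima is achieved by some box in $S$.

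First I would establish the lower bound. For any fixed box $p \in S$ and any fixed inputs $(x,y,z)$ one has $M(a,b,c,x,y,z) \geq p(a,b,c|x,y,z)$ for every outcome triple, so summing over outcomes and using normalization gives $\sum_{a,b,c} M(a,b,c,x,y,z) \geq \sum_{a,b,c} p(a,b,c|x,y,z) = 1$. Summing this over the eight input triples $(x,y,z)$ with $x,y,z \in \{0,1\}$ yields $\sum_{a,b,c,x,y,z} M \geq 8$, hence $\Flex_{\textbf{T}}(\mathcal{B}, B) \geq 1$. The factor $1/8$ is exactly the reciprocal of the number of input triples.

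Next I would characterize the equality case. The bound above shows that $\Flex_{\textbf{T}}(\mathcal{B}, B) = 1$ holds precisely when $\sum_{a,b,c} M(a,b,c,x,y,z) = 1$ for every input triple $(x,y,z)$. If $S = \{p\}$ is a singleton then $M = p$ and this holds trivially, giving $\Flex = 1$. Conversely, suppose $S$ contains two distinct boxes $p^{(1)} \neq p^{(2)}$; then their conditional distributions must differ for at least one input triple $(x_0,y_0,z_0)$. Using $M \geq \max(p^{(1)}, p^{(2)})$ entrywise together with the elementary identity $\sum_i \max(u_i, v_i) = 1 + \sum_i \max(0, v_i - u_i)$, valid for any two probability vectors $u, v$ summing to one, the right-hand side is strictly greater than $1$ whenever $u \neq v$. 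Applied to the two conditional distributions at $(x_0,y_0,z_0)$ this forces $\sum_{a,b,c} M(a,b,c,x_0,y_0,z_0) > 1$ and hence $\Flex_{\textbf{T}}(\mathcal{B}, B) > 1$. Contrapositively, $\Flex_{\textbf{T}}(\mathcal{B}, \Tsn_{\textbf{T}}(\mathcal{B})) = 1$ forces $|S| = 1$, which is precisely self-testing for boxes.

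The argument is elementary and rests only on normalization of the conditional distributions. The one point requiring care is the equality analysis: the entrywise maximum $M$ need not itself be a normalized box, so $\Flex = 1$ is genuinely the assertion that a single member of $S$ attains the maximum on every entry simultaneously, and it is the $\ell^1$-type strict inequality $\sum_i \max(u_i,v_i) > 1$ for distinct distributions that converts this into uniqueness. The generalization to arbitrary numbers of parties, settings and outcomes is immediate upon replacing $1/8$ by the reciprocal of the total number of input tuples.
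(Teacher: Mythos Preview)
Your proof is correct and follows essentially the same approach as the paper: for the nontrivial direction you compare two distinct boxes in $S$ at an input triple where they differ and exploit that the entrywise maximum of two distinct probability vectors sums to strictly more than $1$. Your write-up is in fact more complete than the paper's, which only spells out the implication ``not self-tested $\Rightarrow \Flex>1$'' and leaves both the trivial converse and the baseline inequality $\Flex\geq 1$ implicit; your explicit treatment of these points, together with the clean identity $\sum_i \max(u_i,v_i)=1+\sum_i\max(0,v_i-u_i)$, makes the equality analysis transparent.
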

\begin{proof}
	Assume that $\mathcal{B}$ is not self-tested for boxes in $\textbf{T}$, so there exist at least two different boxes $p_1, p_2 \in \Boxes_{\textbf{T}}(\mathcal{B}, B)$. Let $(a_1,b_1,c_1|x_1,y_1,z_1)$ be a tuple of settings and outcomes for that $p_1(a_1,b_1,c_1|x_1,y_1,z_1) > p_2(a_1,b_1,c_1|x_1,y_1,z_1)$ (from normalization of probabilities it follows that for two different probability distributions such tuple exist). Then
	\begin{equation}
		\begin{aligned}
			\Flex_{\textbf{T}}(\mathcal{B}, B) \geq \frac{1}{8} \left[ \sum_{\substack{(a,b,c,x,y,z) \neq \\ (a_1,b_1,c_1,x_1,y_1,z_1)}} p_2(a,b,c|x,y,z) \right] + \\
			p_1(a_1,b_1,c_1|x_1,y_1,z_1) > \frac{1}{8} \sum_{a,b,c,x,y,z} p_2(a,b,c|x,y,z) = 1.
		\end{aligned}
	\end{equation}
\end{proof}

%{\color{red}PIOTR's comment: Maybe this should be also in definition not just in a paragraph?} We say that  Let us introduce the following quantity
%\begin{eqnarray}
% \nonumber \\
%\end{eqnarray} 
%It is easy to observe that  (this observation will be used in the following numerical part of the paper).

Let $\mathcal{B}$ be any bipartite Bell inequality for subsystem BC that is self-testable within the quantum theory (like in the above theorem). For any $P\in \textbf{HNSQ}(2+1,m,k)$ define a new tripartite inequality given by
\begin{equation}
\mathcal{I}_{\alpha}(P)=\mathcal{B}(P)-\sum_{x,y,z=0}^{m-1}\sum_{b,c=0}^{k-1}\sum_{a\neq f(x)}\alpha_{abc|xyz}p(abc|xyz)
\end{equation}where $\alpha_{abc|xyz}>0$ for any $a,b,c,x,y,z$ and $\mathcal{B}(P)$ stands for value of $\mathcal{B}$ computed on marginal conditional probabilities for subsystem BC coming from tripartite box $P$. One can show that $\Tsn_{\textbf{HNSQ}(2+1,m,k)}(\mathcal{I}_{\alpha})=\Tsn_{\textbf{Q}(3,m,k)}(\mathcal{I}_{\alpha})=\Tsn_{\textbf{Q}(2,m,k)}(\mathcal{B})$ and as a consequence we obtain the following observation. 

\begin{observation}Inequality $\mathcal{I}_{\alpha}$ serves as a self-testing certificate for boxes of the extremal box within $\textbf{HNSQ}(2+1,m,k)$ with a quantum realization (\ref{form_of}) given with respect to the bipartite Bell inequity $\mathcal{B}$ defining $\mathcal{I}_{\alpha}$.
\end{observation}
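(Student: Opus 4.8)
The plan is to verify directly the cardinality-one condition from the definition of self-testing for boxes, i.e.\ to show that $\Boxes_{\textbf{HNSQ}(2+1,m,k)}(\mathcal{I}_{\alpha},\Tsn_{\textbf{HNSQ}(2+1,m,k)}(\mathcal{I}_{\alpha}))$ consists of exactly the box $P$ from (\ref{form_of}). I would take the stated identity $\Tsn_{\textbf{HNSQ}(2+1,m,k)}(\mathcal{I}_{\alpha})=\Tsn_{\textbf{Q}(2,m,k)}(\mathcal{B})$ as given and concentrate on the equality conditions in the chain of bounds that yields it, since those conditions are what pin the maximizer down.

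First I would record the two inequalities behind that bound, valid for every $P'\in\textbf{HNSQ}(2+1,m,k)$. Because each coefficient $\alpha_{abc|xyz}>0$ and each $p'(abc|xyz)\geq 0$, the penalty term is nonnegative, so $\mathcal{I}_{\alpha}(P')\leq\mathcal{B}(P')$, with equality precisely when $p'(abc|xyz)=0$ for all $a\neq f(x)$. For the second inequality I would note that, after marginalizing the untrusted outcome of party $A$, the trusted party $C$ is described by the single-party no-signaling assemblage $\tilde{\sigma}_{b|y}=\sum_a\sigma_{ab|xy}$, which by \cite{G89,HJW93} always admits a quantum realization; hence the $BC$ marginal of $P'$ lies in $\textbf{Q}(2,m,k)$ and $\mathcal{B}(P')\leq\Tsn_{\textbf{Q}(2,m,k)}(\mathcal{B})$. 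Composing the two bounds gives $\mathcal{I}_{\alpha}(P')\leq\Tsn_{\textbf{Q}(2,m,k)}(\mathcal{B})$, and the reference box (\ref{form_of}) saturates it: its party $A$ outputs $f(x)$ deterministically (since $\mathrm{Tr}(P_{f(x)|x}|0_A\rangle\langle 0_A|)=1$), so the penalty vanishes, while its $BC$ marginal is by construction the maximal violator of $\mathcal{B}$.

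Next I would characterize saturation to obtain uniqueness. If some $P'$ attains $\Tsn_{\textbf{HNSQ}(2+1,m,k)}(\mathcal{I}_{\alpha})$, then both inequalities are tight. Tightness of the first forces $p'(abc|xyz)=0$ whenever $a\neq f(x)$, i.e.\ party $A$ is deterministic. Tightness of the second means the $BC$ marginal of $P'$ attains the maximal quantum violation of $\mathcal{B}$; since $\mathcal{B}$ is self-testable in the quantum set, this marginal correlation is uniquely fixed and coincides with the reference $BC$ table $q(bc|yz)$ of (\ref{form_of}). I would then close using no-signaling of $P'$: as only the branch $a=f(x)$ carries weight, $\sum_a p'(abc|xyz)=p'(f(x),b,c|xyz)$, which no-signaling renders independent of $x$ and hence equal to $q(bc|yz)$; therefore $p'(abc|xyz)=\delta_{a,f(x)}\,q(bc|yz)=p(abc|xyz)$, so $P'=P$ and the maximizing set is a singleton.

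The hard part will be the precise use of self-testability of $\mathcal{B}$: I must apply it at the level of correlations, concluding that the $BC$ marginal attaining the Tsirelson value of $\mathcal{B}$ is unique as a probability table (which holds because isometric equivalence of quantum realizations preserves the observed statistics), rather than merely up to a local isometry on the underlying state and measurements. The remaining ingredients---nonnegativity of the penalty, the single-party quantum realization that makes the $BC$ marginal quantum, and the no-signaling reconstruction of the full table from its deterministic $A$-part and fixed $BC$-part---are routine. As an alternative one could phrase the same conclusion through the preceding lemma, by checking $\Flex_{\textbf{HNSQ}(2+1,m,k)}(\mathcal{I}_{\alpha},\Tsn_{\textbf{HNSQ}(2+1,m,k)}(\mathcal{I}_{\alpha}))=1$, but the explicit reconstruction above is more transparent and directly exhibits $P$ as the unique extremal maximizer.
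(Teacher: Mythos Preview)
Your proposal is correct and follows essentially the same route that the paper indicates: the paper only states that the Tsirelson identity $\Tsn_{\textbf{HNSQ}(2+1,m,k)}(\mathcal{I}_{\alpha})=\Tsn_{\textbf{Q}(2,m,k)}(\mathcal{B})$ implies the observation, and your argument is exactly the saturation analysis behind that claim---the penalty term forces party $A$ to be deterministic, the GHJW realization makes the $BC$ marginal quantum so that self-testability of $\mathcal{B}$ pins down that marginal, and no-signaling reconstructs the full table. These are the very steps the paper spells out (for the extremality proof) in Appendix~D, so your fill-in matches the intended reasoning.
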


It easy to see that this line of reasoning can be generalized to the case of $\textbf{HNSQ}(n+1,m,k)$ with an arbitrary number of untrusted no-signaling parties $n$ when a considered box is a product of some deterministic $(n-1)$-partite box and some bipartite box that is self-testable within the quantum theory.

Note that these results open new potential one-sided cryptographic applications. While quantum correlations cannot provide extremality-based full security within a fully no-signaling theory, they can still do so against attacks within the restricted super-quantum $\textbf{HNSQ}$ set. \newline

\noindent\textit{Numerical results.-} We have numerically calculated maximal values for the set of Bell operators provided by \'{S}liwa \cite{Sliwa} and related values of $\Flex_{\textbf{T}}$ parameter allowed within sets $\textbf{Q}, \textbf{HNSQ}$ and $\textbf{NS}$.
%Quantum values have been upper bounded using NTV without symmetric extensions, while linear programming was used for no-signaling case.
Upper bounds for an intermediate set $\textbf{HNSQ}$ has been obtained with a proposed method from Appendix G in \cite{Supp} using DPS \cite{DPS,DPS2} with symmetric extension involving two copies of the measurement of Charlie and PPT-constraints imposed, see the Appendix H in  \cite{Supp} for presentation of proposed method for binary measurements and outcomes. (Note that alternative approach to upper approximation of intermediate set $\textbf{HNSQ}$ in the general case can be also preformed based on the notion of assemblage moment matrices \cite{Liang1,Liang2}.) The lower bound has been obtained using the see-saw method (for an explanation, see \cite{PV10}) and confirms the upper bound in all cases.

Our results confirm that the $\textbf{HNSQ}$ theory is strictly between $\textbf{Q}$ and $\textbf{NS}$. For example for \'{S}liwa's Bell operators 5, 8, 9, 12, 14, 15, 16, 42 the $\textbf{HNSQ}$ and $\textbf{NS}$ Tsirelson's bounds are the same and strictly larger than $\textbf{Q}$. This shows a striking property that $\textbf{HNSQ}$, contrary to $\textbf{Q}$, contains extremal no-signaling non-local boxes.

On the other hand, for Bell operators 6, 20, 45, both the quantum and $\textbf{HNSQ}$ Tsirelson's bounds are the same (strictly smaller than $\textbf{NS}$), meaning that in laboratory it is possible to realize experiments that are secure against $\textbf{HNSQ}$ adversary. We observe that for operator 20 and 45 the value of $\Flex$ in $\textbf{HNSQ}$ is equal to $1$ meaning that we provide a self-testing for boxes of a quantumly-realizable boxes within super-quantum theory.

Operators 7, 10, 18, 19, 21, 24, 25, 26, 27, 32, 36, 39 and 40 have different Tsirelson's bound for each theory.\newline

\textit{Conclusions.-} In this paper, we have introduced and studied a set of hybrid no-signaling-quantum Bell correlations obtained by performing quantum measurements on trusted parts of no-signaling assemblages. We introduced a tool for optimization over the set, namely a hierarchy of semi-definite programming based outer approximations. As a central interesting property of this set of super-quantum correlations that makes it more appealing than the set of no-signaling correlations, we have proved that there exist extremal points (boxes) in this set that admit quantum realization, and that furthermore these realizations are self-testing for boxes in some cases. A central open question is to formulate (one-sided)-device-independent protocols against super-quantum adversaries that are restricted to prepare boxes within the hybrid no-signaling-quantum correlation set. In particular, it would be interesting to investigate if these new correlations allow to circumvent some of the no-go theorems against super-polynomial privacy amplification that were proven in the scenario of general no-signaling adversaries \cite{RA12}.

\begin{acknowledgments}
\textit{Acknowledgments}- M.B., P. M. and P.H. acknowledge support by the Foundation for Polish Science (IRAP project, ICTQT, contract no. 2018/MAB/5, co-financed by EU within Smart Growth Operational Programme). M. B. acknowledges partial support from by DFG (Germany) and NCN (Poland) within the joint funding initiative Beethoven2 (Grant No. 2016/23/G/ST2/04273). R.R. acknowledges support from the Start-up Fund 'Device-Independent Quantum Communication Networks' from The University of Hong Kong, the Seed Fund 'Security of Relativistic Quantum Cryptography' and the Early Career Scheme (ECS) grant 'Device-Independent Random Number Generation and Quantum Key Distribution with Weak Random Seeds'.
\end{acknowledgments}

%\vspace{1cm}

\appendix

\section{Proof of the Theorem 1}\label{appA}

If each positive operator $\sigma_{\textbf{a}_n|\textbf{x}_n}$ can be realized by the aforementioned presentation $\sigma_{\textbf{a}_n|\textbf{x}_n}=\mathrm{Tr}_{A_1,\ldots, A_n}(M^{(1)}_{a_1|x_1}\otimes\ldots \otimes M^{(n)}_{a_n|x_n}\otimes \openone W)$ with some Hermitian operator $W$ and some local measurements $M^{(i)}_{a_i|x_i}$, it is easy to see that a collection $\Sigma=\left\{\sigma_{\textbf{a}_n|\textbf{x}_n}\right\}$ form a no-signaling assemblage (see conditions (1) and (2) in the main paper). This explains one implication.

The other implication follows from generalization of the construction presented in \cite{Unified}. Indeed, let us consider a first untrusted subsystem $A_1$ described by some Hilbert space $\mathbb{C}^d$. For any $a_1=0,\ldots, k-2, x_1=0,\ldots, m-1$ we define normalized vectors $|\phi_{a_1|x_1}\rangle\in \mathbb{C}^d$ in such a way that
\begin{equation}
	\begin{aligned}
		&\left\{N_i\right\}_i = \left\{\openone\right\} \cup \\
		&\left\{|\phi_{a_1|x_1}\rangle \langle \phi_{a_1|x_1}|:a_1=0,\ldots, k-2, x_1=0,\ldots, m-1\right\}
	\end{aligned}
\end{equation}
is a set of linearly independent operators (it is always possible for $d$ large enough, according to \cite{Unified}, one may put $d=\max (m,k)$). Taking appropriate $z_{a_1|x_1}>0$ we define $M^{(1)}_{k-1|x_1}=\openone-\sum_{a_1=0}^{k-2}z_{a_1|x_1}|\phi_{a_1|x_1}\rangle \langle \phi_{a_1|x_1}|\geq 0$ and put $M^{(1)}_{a_1|x_1}=z_{a_1|x_1}|\phi_{a_1|x_1}\rangle \langle \phi_{a_1|x_1}|$ for remaining $a_1=0,\ldots, k-2$. For
\begin{equation}
	\left\{N_i\right\}_i=\left\{\openone\right\}\cup \left\{M^{(1)}_{a_1|x_1}:a_1=0,\ldots, k-2, x_1=0,\ldots, m-1\right\}
\end{equation}
we introduce a dual basis (with respect to the Hilbert-Schmidt inner product) of Hermitian operators $\left\{\tilde{N}_j\right\}_j=\left\{\tilde{\openone}\right\}\cup \left\{\tilde{M}^{(1)}_{a_1|x_1}:a_1=0,\ldots, k-2, x_1=0,\ldots, m-1\right\}$, i.e. $\mathrm{Tr}(N_i\tilde{N}_j)=\delta_{ij}$. We repeat the same construction for each untrusted subsystem $A_i$ (for $i=1,\ldots, n$) described by the same Hilbert space $\mathbb{C}^d$. As a result we obtain the same set of POVMs elements acting on each untrusted subsystem, i.e. if $a_i=a_j,x_i=x_j$ then $M^{(i)}_{a_i|x_i}=M^{(j)}_{a_j|x_j}$ for all $i,j=1,\ldots, n$.

Let $\mathcal{J}_n$ denotes the family of all subsets of the set $\left\{1,\ldots, n\right\}$. We introduce the following Hermitian operator
\begin{widetext}
\begin{equation}\label{gen_form_of_W}
W=\sum_{ I\in \mathcal{J}_n, I=\left\{i_1,\ldots, i_l\right\}} \sum_{a_{i_1}=0}^{k-2}\ldots \sum_{a_{i_l}=0}^{k-2} \sum_{x_{i_1}=0}^{m-1}\ldots \sum_{x_{i_l}=0}^{m-1} \tilde{M}_{a_1|x_1}^{(1),I}\otimes \ldots \otimes \tilde{M}_{a_n|x_n}^{(n),I} \otimes \sigma_{I} 
\end{equation}
\end{widetext}where
\begin{equation}\label{sigma_I}
\sigma_I=\begin{cases}
 \sigma_{a_{i_1}\ldots a_{i_l}|x_{i_1}\ldots x_{i_l}}& \mbox{if $I\neq \emptyset$} \\
\sigma_B &\mbox{if $I=\emptyset$}
\end{cases}
\end{equation}and
\begin{equation}
\tilde{M}_{a_i|x_i}^{(i),I}=\begin{cases}
 \tilde{M}_{a_i|x_i}^{(i)}& \mbox{if $i \in I$} \\
\tilde{\openone} &\mbox{if $i \notin I$}
\end{cases}.
\end{equation}For example with $n=2$, formula $\ref{gen_form_of_W}$ gives explicit expression
\begin{align}
W=&\sum_{a_1,a_2=0}^{k-2}\sum_{x_1,x_2=0}^{m-1}\tilde{M}^{(1)}_{a_1|x_1}\otimes\tilde{M}^{(2)}_{a_2|x_2}\otimes \sigma_{a_1a_2|x_1x_2}\nonumber \\
&+\sum_{a_1}^{k-2}\sum_{x_1}^{m-1}\tilde{M}^{(1)}_{a_1|x_1}\otimes\tilde{\openone}\otimes \sigma_{a_1|x_1}\nonumber\\
&+\sum_{x_2=0}^{k-2}\sum_{a_2=0}^{m-1}\tilde{\openone}\otimes\tilde{M}^{(2)}_{a_2|x_2}\otimes \sigma_{a_2|x_2}+\tilde{\openone}\otimes \tilde{\openone}\otimes \sigma_B.
\end{align}

Straightforward calculations based on properties of $M_{a_i|x_i}^{(i)}$ and $\tilde{M}_{a_i|x_i}^{(i)}$ show that $\mathrm{Tr}(W)=1$ and $\sigma_{\mathbf{a}_n|\mathbf{x}_n}=\mathrm{Tr}_{A_1,\ldots A_n}(M^{(1)}_{a_1|x_1} \otimes\ldots \otimes M^{(n)}_{a_n|x_n}\otimes \openone W)$ for all possible settings $x_1,\ldots x_n$ and all choice of outcomes $a_i=0,\ldots, k-2$ while $i=1,\ldots ,n$. Note that linear relations ((1) and (2) in the main paper) fulfilled by elements (operators) $\sigma_{\textbf{a}_n|\textbf{x}_n}$ and $\sigma_B$ are the same as linear relations fulfilled respectively by non-negative numbers $p(\textbf{a}_n|\textbf{x}_n)$ and $1$ when $P=\left\{p(\textbf{a}_n|\textbf{x}_n)\right\}$ describes a no-signaling box. Because of that, according to the main result of \cite{Unified}, we see that in fact $\sigma_{\mathbf{a}_n|\mathbf{x}_n}=\mathrm{Tr}_{A_1,\ldots A_n}(M^{(1)}_{a_1|x_1} \otimes\ldots \otimes M^{(n)}_{a_n|x_n}\otimes \openone W)$ for all possible settings $x_1,\ldots x_n$ and outcomes $a_1,\ldots a_n$, as remaining operators $\sigma_{\textbf{a}_n|\textbf{x}_n}$ with $a_i=k-1$ for some $i$ can be express by linear combinations of $\sigma_I$ defined in (\ref{sigma_I}) (due to formulas (1) and (2) in the main paper). This concludes the proof of Theorem 1.

\section{Bound for operator norm of $W$ in a binary case}\label{appB}

In the discussion following Theorem 1 we stated, that by fixing all measurements $M^{(i)}_{a_i|x_i}$ for a given scenario, one can provide an upper bound on operator norms of all $W$ that are sufficient for realization of any no-signaling assemblage (in a given scenario $(n,m,k,d_B)$). Here we provide an explicit calculation of this bound in the case of binary measurements with binary outputs performed on each untrusted party.

According to construction of $W$ described in the proof of Theorem 1 (see the above Appendix \ref{appA}) it is enough to consider the same set of measurements on each untrusted subsystem. In the binary case we choose positive operators $M_{0|0},M_{0|1}$ in such a way that $\left\{\openone,M_{0|0},M_{0|1} \right\}$ is a linearly independent system. Dual system $\left\{\tilde{\openone},\tilde{M}_{0|0},\tilde{M}_{0|1} \right\}$ is given by 
$\tilde{\openone}=a_1\openone+a_2M_{0|0}+a_3M_{0|1}$, $\tilde{M}_{0|0}=b_1\openone+b_2M_{0|0}+b_3M_{0|1}$ and $\tilde{M}_{0|1}=c_1\openone+c_2M_{0|0}+c_3M_{0|1}$, where $a=(a_1,a_2,a_3)^T, b=(b_1,b_2,b_3)^T,c=(c_1,c_2,c_3)^T$ and $a=M^{-1}e_1, b=M^{-1}e_2, c=M^{-1}e_3$ with $\left\{e_i\right\}_i$ denoting the standard basis in $\mathbb{C}^3$ and 
\begin{equation}
M=\begin{pmatrix}
\begin{array}{ccc}
 \mathrm{Tr}(\openone) &  \mathrm{Tr}(M_{0|0}) & \mathrm{Tr}(M_{0|1}) \\  
 \mathrm{Tr}(M_{0|0})  & \mathrm{Tr}(M^2_{0|0}) & \mathrm{Tr}(M_{0|0}M_{0|1})  \\
 \mathrm{Tr}(M_{0|1})  & \mathrm{Tr}(M_{0|1}M_{0|0})  & \mathrm{Tr}(M^2_{0|1})  
\end{array}
\end{pmatrix}.
\end{equation}

We will restrict our attention to projective measurements. In other words without loss of generality (i.e. up to unitary transformation) we may put $M_{0|0}=|0\rangle \langle 0|, M_{0|1}=|\theta\rangle \langle \theta|$ with $|\theta\rangle=(\cos \theta, \sin \theta)^T$ where $\theta \in (0,\frac{\pi}{2})\cup (\frac{\pi}{2},\pi)$. From direct calculations based on the above description we obtain
\begin{equation}
\left\|\tilde{\openone}\right\|=\begin{cases}
\frac{1}{2\cos \theta}+\frac{1}{2} & \mbox{if $\theta \in (0,\frac{\pi}{2})$} \\
-\frac{1}{2\cos \theta}+\frac{1}{2} &\mbox{if $\theta \in (\frac{\pi}{2},\pi)$}
\end{cases}
\end{equation}and
\begin{equation}
\left\|\tilde{M}_{0|0}\right\|=\left\|\tilde{M}_{0|1}\right\|=|(2\sin \theta \cos \theta)^{-1}|.
\end{equation}Because of these particular forms, we may restrict our attention to the case $\theta \in (0,\frac{\pi}{2})$. With this choice and by the construction of $W$ presented in the proof of Theorem 1 in Appendix \ref{appA} (having in mind that for any no-signaling assemblage $\Sigma=\left\{\sigma_{\mathbf{a}_n|\mathbf{x}_n}\right\}$ we have $\left\|\sigma_{I}\right\| \leq 1$) we may use a subadditive property of the operator norm in order to show that (\ref{gen_form_of_W}) implies the following bound
\begin{widetext}
\begin{equation}\label{expw_thet}
\left\|W(\theta)\right\|\leq \sum_{k=0}^n \binom{n}{k}\left\|\tilde{\openone}\right\|^k \sum_{i=0}^{n-k}\binom{n-k}{i}\left\|\tilde{M}_{0|0}\right\|^{i}\left\|\tilde{M}_{0|1}\right\|^{n-k-i}=\sum_{k=0}^n \binom{n}{k}\left(\frac{1}{2\cos \theta}+\frac{1}{2}\right)^k\left(\frac{1}{\sin \theta\cos \theta}\right)^{n-k}=(f(\theta))^n
\end{equation}
\end{widetext}
with 
\begin{equation}
f(\theta)=\frac{1}{2}+\frac{1}{2\cos \theta}+\frac{1}{\sin \theta \cos \theta}.
\end{equation}

Expression (\ref{expw_thet}) attains minimal value for $\theta\in (0,\frac{\pi}{2})$ such that $f'(\theta)=0$ (regardless of $n$), i.e. when angle $\theta$ fulfills $\sin^3\theta +4\sin^2\theta-2=0$. Numerical calculations \cite{mathematica} shows that $\theta_{min}\approx 0.715$ and $f(\theta_{min})\approx 3.182$ (for comparison when $\theta_{+}=\frac{\pi}{4}$, i.e. $|\theta\rangle =|+\rangle$, one obtains $f(\theta_{+})=\frac{5+\sqrt{2}}{2}\approx 3.207$ - when $n$ is small this choice of angle $\theta_{+}$ leads to upper bound close to the optimal one).

\section{Proof of the Theorem 3}\label{appC}

Let us consider the fist statement of the theorem. The inclusion part is obvious, and as for $n=1$ we have $\textbf{HNSQ}(1+1,m,k)=\textbf{Q}(2,m,k)\subsetneq  \textbf{NS}(2,m,k)$ (see \cite{G89,HJW93} and \cite{RTHHPRL}), it is enough to examine $n\geq 2$. Put $n=2$ and consider any $P=\left\{p(abc|xyz)\right\}\in \textbf{HNSQ}(2+1,m,k)$. Obviously 
\begin{equation}
p(abc|xyz)=\mathrm{Tr}_{C}(M_{c|z}\sigma_{ab|xy})
\end{equation} for some POVMs elements $M_{c|z}$ and no-signaling assemblage defined by $\sigma_{ab|xy}$. Observe that in particular
\begin{equation}\label{obstruction}
p(bc|yz)=\mathrm{Tr}_{C}(M_{c|z}\sigma_{b|y})=\mathrm{Tr}_{BC}(N_{b|y}\otimes M_{c|z}|\psi_{BC}\rangle \langle \psi_{BC}|)
\end{equation}as elements $\sigma_{b|y}$ define a bipartite no-signaling assemblage, which always admits a quantum realization \cite{G89,HJW93}. Construct now a box $\tilde{P}=\left\{\tilde{p}(abc|xyz)\right\}\in \textbf{NS}(3,m,d)$ stating from $R=\left\{r(bc|yz)\right\}\in \textbf{NS}(2,m,k)$ such that $R\notin \textbf{HNSQ}(1+1,m,k)=\textbf{Q}(2,m,k)$ and putting
\begin{equation}\label{obstruction}
\tilde{p}(abc|xyz)=d(a|x)r(bc|yz)
\end{equation}where $d(a|x)$ define some fixed deterministic box. As (\ref{obstruction}) holds, we see that $\tilde{P}\notin \textbf{HNSQ}(2+1,m,k)$, since marginal probability $\tilde{p}(bc|yz)=r(bc|yz)$ does not have a quantum realization. One may repeat this procedure inductively, starting construction for step $n+1$ from the correlations box constructed in the previous step $n$.

Now let us consider the second statement of the theorem. The inclusion part is obvious. For $n\geq 2$ consider $P=\left\{p(a_1\ldots a_{n+1}|x_1\ldots x_{n+1})\right\}\in \textbf{HNSQ}(n+1,m,k)$ such that the marginal box $\tilde{P}=\left\{p(a_1\ldots a_{n}|x_1\ldots x_{n})\right\}\in \textbf{NS}(n,m,k)$ is non-local and extremal in $\textbf{NS}(n,m,k)$. According to the main results of \cite{RTHHPRL}, $\tilde{P}\notin \textbf{Q}(n,m,k)$ and we see that $P\notin \textbf{Q}(n+1,m,k)$ as well. This concludes the proof.

\section{Proof of the Theorem 5}\label{appD}
Observe that 
\begin{equation}\label{primal}
p(abc|xyz)=\delta_{a,f(x)}p(bc|yz)
\end{equation}where quantum correlations
\begin{equation}\label{primal1}
p(bc|yz)=\mathrm{Tr}_{BC}(Q_{b|y}\otimes R_{c|z}|\psi_{BC}\rangle \langle \psi_{BC}|)
\end{equation}obtain a quantum maximum for some fixed inequality $\mathcal{B}$ with desired properties. Consider a presentation of $P$ as a convex combination of elements from the set of hybrid no-signaling-quantum correlations, i.e.
\begin{equation}\label{conv}
p(abc|xyz)=\sum_i^n q_ip^{(i)}(abc|xyz)=\sum_i^n q_i \mathrm{Tr}_{C}(R^{(i)}_{c|z}\sigma^{(i)}_{ab|xy})
\end{equation}where $q_i>0, \sum_i^n q_i=1$, $\Sigma_i=\left\{\sigma^{(i)}_{ab|xy}\right\}$ are some no-signaling assemblages and $R^{(i)}_{c|z}$ stand for elements of some POVMs.
Note that due to (\ref{primal}) and (\ref{conv}) we obtain
\begin{equation}
p(ab|xy)=\delta_{a,f(x)}p(b|y)=\sum_i^n q_i \mathrm{Tr}_{C}(\sigma^{(i)}_{ab|xy})
\end{equation}and as a result
\begin{equation}\label{sig}
\sigma^{(i)}_{ab|xy}=\delta_{a,f(x)}\sigma_{b|y}^{(i)}.
\end{equation}On the other hand, we also have
\begin{equation}\label{primal2}
p(bc|yz)=\sum_i^n q_i p^{(i)}(bc|yz)=\sum_i^n q_i \mathrm{Tr}_{C}(R^{(i)}_{c|z}\sigma^{(i)}_{b|y}).
\end{equation}Since any $\tilde{\Sigma}_i=\left\{\sigma^{(i)}_{b|y}\right\}$ is a bipartite no-signaling assemblage, there is a quantum realization 
\begin{equation}
\sigma^{(i)}_{b|y}=\mathrm{Tr}_{B}(S^{(i)}_{b|y}\otimes \openone|\phi^{(i)}_{BC}\rangle \langle \phi^{(i)}_{BC}|)
\end{equation}where in general $S^{(i)}_{b|y}$ are elements of POVMs. If so, then by (\ref{primal2})
\begin{align}\label{form0}
p^{(i)}(bc|yz)&=\mathrm{Tr}_{BC}(S^{(i)}_{b|y}\otimes R^{(i)}_{c|z}|\phi^{(i)}_{BC}\rangle \langle \phi^{(i)}_{BC}|)\\ \nonumber
&=\mathrm{Tr}_{\tilde{B}\tilde{C}}(\tilde{S}^{(i)}_{b|y}\otimes \tilde{R}^{(i)}_{c|z}|\tilde{\phi}^{(i)}_{\tilde{B}\tilde{C}}\rangle \langle \tilde{\phi}^{(i)}_{\tilde{B}\tilde{C}}|)
\end{align}where states $|\tilde{\phi}^{(i)}_{\tilde{B}\tilde{C}}\rangle \langle \tilde{\phi}^{(i)}_{\tilde{B}\tilde{C}}|$ and PVMs elements $\tilde{S}^{(i)}_{b|y}, \tilde{R}^{(i)}_{c|z}$ come from Naimark's dilation of elements $S^{(i)}_{b|y}$ and $R^{(i)}_{c|z}$ respectively. In particular, $p^{(i)}(bc|yz)$ admits quantum realization (for any $i$).

Because of (\ref{primal1}) and (\ref{primal2}) each $p^{(i)}(bc|yz)$ maximizes $\mathcal{B}$. Since (\ref{form0}) holds, by the self-testing statement for projections $Q_{b|y}, R_{c|z}$ and state $|\psi_{BC}\rangle \langle \psi_{BC}|$, we obtain 
\begin{equation}
p^{(i)}(bc|yz)=p(bc|yz).
\end{equation}Finally, the above result together with (\ref{primal}) and (\ref{sig}) show that for any $i$
\begin{align}\nonumber
p^{(i)}(abc|xyz)&=\mathrm{Tr}_{C}(R^{(i)}_{c|z}\sigma^{(i)}_{ab|xy})=\delta_{a,f(x)}\mathrm{Tr}_{C}(R^{(i)}_{c|z}\sigma^{(i)}_{b|y})\\ \nonumber
&=\delta_{a,f(x)}p(bc|yz)=p(abc|xyz) \nonumber
\end{align}which ends the proof.

\section{Doherty-Parillo-Spedalieri hierarchy, the SWAP trick, and bilinear optimization}
\label{sec:DPSandSWAP}

In order to describe the outer approximations hierarchy, we first recapitulate the Doherty-Parillo-Spedalieri hierarchy (DPS)~\cite{DPS3,DPS,DPS2}. Let us consider a state $\rho$ on $\mathcal{H}_A \otimes \mathcal{H}_B \otimes \mathcal{H}_C \otimes \cdots$ that is separable, i.e. can be written as a convex combination of pure product states:
\begin{equation}
	\label{eq:Seperable}
	\rho = \sum_{i} \lambda_i \proj{\phi}_A \otimes \proj{\varphi}_B \otimes \proj{\chi}_C \otimes \cdots,
\end{equation}
where $\sum_{i} \lambda_i = 1$, $\lambda_i \geq 0$. Now, let $\tilde{\rho}$ be a state on $\mathcal{H}_{A^k} \otimes \mathcal{H}_{B^l} \otimes \mathcal{H}_{C^m} \otimes \cdots$, where $\mathcal{H}_{A^k}$ is a product of $k$ spaces $\mathcal{H}_A$, and similarly for $B$, $C$ etc. The state $\tilde{\rho}$ is an \textbf{extension} of $\rho$ if
\begin{equation}
	\rho = \Tr_{A^{k-1} B^{l-1} C^{m-1} \cdots} \left[ \tilde{\rho} \right],
\end{equation}
where the partial trace is over all but the first copy of each space. Let $\mathbb{S}_A$ be the set of all permutation operators between copies of the space $\mathcal{H}_A$, and similarly for $B$, $C$ etc. The state extension $\tilde{\rho}$ is symmetric if for all $P \in \mathbb{S}_A \otimes \mathbb{S}_B \otimes \mathbb{S}_C$ it holds that
\begin{equation}
	\tilde{\rho} = P \tilde{\rho} P.
\end{equation}
The state extension $\tilde{\rho}$ is PPT if $\tilde{\rho}$ remains positive after any partial transposition over subsystems. If $\rho$ is separable then for any $k, l, m, \cdots$, there exist $\tilde{\rho}$ that is a PPT symmetric extension of $\rho$. The core idea of the DPS hierarachy is to check whether for fixed $k, l, m, \cdots$ a PPT symmetric extension of $\rho$ exist; if not this means that $\rho$ is not separable.

Since the PPT symmetric extension constraints can be formulated as an SDP, the DPS method allows to optimize over a relaxation of the set of separable states on given spaces; the higher are $k, l, m, \cdots$, the relaxation is closer to the actual set of all separable states. Thus, from a PPT symmetric extension state $\tilde{\rho}$ we may construct a state $\varrho$ on $\mathcal{H}_A \otimes \mathcal{H}_B \otimes \mathcal{H}_C \otimes \cdots$ that is in some sense \textit{close} to be separable.
%Below we denote such construction by $R$, i.e.
%\begin{equation}
%	R \left[ \tilde{\rho} \right] = \varrho.
%\end{equation}

Further we denote by $\DPS\left(\mathcal{H}_A, \mathcal{H}_B, \mathcal{H}_C, \cdots\right)$ the set of all subnormalized states satisfying DPS criteria for some fixed level of that hierarchy. We then write (cf.~\eqref{eq:Seperable}):
\begin{equation}
	\rho = \sum_{i} \lambda_i \proj{\phi}_A \tilde{\otimes} \proj{\varphi}_B \tilde{\otimes} \proj{\chi}_C \tilde{\otimes} \cdots,
\end{equation}
 where $\tilde{\otimes}$ denotes the fact that the state is only \textit{close} to the product form, in the sense of DPS relaxation. In the case of higher dimensions, or more parties, one needs to include more subspaces to include optimization over more mesurements.

Another property that we will use is the relation~\cite{Werner89}:
\begin{equation}
	\label{eq:Werner89}
	\Tr \left[ \SWAP(A,A') \rho^{(1)}_{A} \otimes \rho^{(2)}_{A'} \right] = \Tr \left[ \rho^{(1)}_{A} \rho^{(2)}_{A} \right],
\end{equation}
where $\mathcal{H}_A$ and $\mathcal{H}_A'$ are isomorphic, and $\SWAP(A,A')$ is the SWAP operator between them. Below we denote space isomorphic to $A$ by $A'$ and $A''$, and similarly for $B$, $C$, etc.

In order to perform the optimization over boxes we need to optimize both over the state and measurements. Consider the following 
\begin{equation}
	\label{eq:WM}
	\begin{aligned}
		&\xi_{A'B'C'\cdots A''B''C''\cdots} \equiv \sum_i \lambda_i \kb{\phi_i}{\phi_i}_{A'B'C'\cdots} \otimes \\
		&\qquad \otimes \kb{\psi^A_i}{\psi^A_i}_{A''} \otimes \kb{\psi^B_i}{\psi^B_i}_{B''} \otimes \kb{\psi^C_i}{\psi^C_i}_{C''} \otimes \cdots,
	\end{aligned}
\end{equation}
where $\sum_i \lambda_i = 1$. From~\eqref{eq:Werner89} one can see that
\begin{equation}
	\begin{aligned}
		&\Tr_{A' A''} \left[ \SWAP(A,A') \xi_{A'B'C'\cdots A''B''C''\cdots} \right] = \\
		&\sum_i \lambda_i \Tr_{A'} \left[ \kb{\phi_i}{\phi_i}_{A'B'C'\cdots} \left( \kb{\psi^A_i}{\psi^A_i}_{A''} \otimes \openone_{B'C'\cdots} \right) \right],
	\end{aligned}
\end{equation}
thus the result is the post-measurement state after measurement on $\mathcal{H}_{A'}$. This construction easily generalizes if more measurements on each Hilbert space are possible, simply by adding subspaces ``storing'' their projectors. Thus all probabilities of measurements can be extracted as a linear function of $\xi_{A'B'C'C''}$, see~\cite{DPS}, eq.~(5).

In most cases, where only quantum systems are considered, one assumes in~\eqref{eq:WM} that $\lambda_i \geq 0$. Since SDP allows only to optimize for a target that is linear in variables, we combine DPS hierarchy and the SWAP trick to create an SDP variable approximating a joint state of a multipartite quantum state on systems $A'B'C'\cdots$ and a measurement represented on a system $A''B''C''\cdots$. On the other hand, if it doesn't hold that $\lambda_i \geq 0$ for all $i$, then the direct application of DPS is not possible.

\section{Bounded-norm operators optimization}
\label{sec:AppB}

For a given Hilbert space $\mathcal{H}$ let $\Dim\left(\mathcal{H}\right)$, $\Op\left(\mathcal{H}\right)$ and $\PSD\left(\mathcal{H}\right)$ be its dimension, the set of all bounded operators, and the set all bounded positive semidefinite operators acting on it, respectively. We also denote $\PSD\left(\mathcal{H}_1, \mathcal{H}_2, \cdots\right)$ as the set of operators of a form of product of semidefinite operators over those spaces, i.e.
\begin{equation}
	\PSD\left(\mathcal{H}_1, \mathcal{H}_2, \cdots\right) \equiv \PSD\left(\mathcal{H}_1\right) \otimes \PSD\left(\mathcal{H}_2\right) \otimes \cdots.
\end{equation}
For any $W \subseteq \Op\left(\mathcal{H}\right)$ let
\begin{equation}
	\Spec \left[ W \right] \equiv \inf \left\{ \Lambda \in \mathbb{R}_{+} : \forall_{w \in W} \left(-\Lambda \openone \preceq w \preceq \Lambda \openone \right) \right\},
\end{equation}
i.e. $\Spec \left[ W \right]$ is the boundary of the spectrum of all operators in $W$. Let $\Eig(w)$ for $w \in \Op\left(\mathcal{H}\right)$ be the set of all eigenvalues of $w$. Obviously,
\begin{equation}
	\Eig(w) \subseteq [-\Spec[\{w\}], \Spec[\{w\}]].
\end{equation}

Let
\begin{subequations}
	\begin{equation}
		\label{eq:L}
		\mathcal{L} : \Op\left( \mathcal{H}_X \otimes \mathcal{H}_{Y1} \otimes \mathcal{H}_{Y2} \otimes \cdots \right) \rightarrow \left[ \Op\left(\mathcal{H}_{Zi}\right) \right]_i,
	\end{equation}
	\begin{equation}
		\label{eq:T}
		\mathcal{T} : \Op\left(\mathcal{H}_Z\right) \rightarrow \mathbb{R},
	\end{equation}
\end{subequations}
be linear functionals, and let
\begin{equation}
	M_X \equiv \frac{1}{\Dim\left( \mathcal{H}_X \right)} \openone_X,
\end{equation}
be the maximally mixed normalized state operator on $\mathcal{H}_X$. In~\eqref{eq:L} the expression $\left[ \Op\left(\mathcal{H}_{Zi}\right) \right]_i$ denotes an indexed sequence of operators. We write $Z =  [Z_i]_i \succeq 0$ for $Z \in \left[ \Op\left(\mathcal{H}_{Zi}\right) \right]_i$ as a joint condition that $Z_i \succeq 0$ for all $i$.

Consider the following optimization problem for given $\left( W_X, \mathcal{L}, \mathcal{T} \right)$, where $W_X \subseteq \Op\left(\mathcal{H}_X\right)$ is some fixed set of operators on $\mathcal{H}_X$:
\begin{align}
	\label{eq:Opt1}
	\begin{split}
		\text{maximize } &\null \mathcal{T}(z), \\
		\text{over } &\null x \in W_X, \\
		&\null y \in \PSD\left( \mathcal{H}_{Y1}, \mathcal{H}_{Y2}, \cdots \right), \\
		\text{subject to } &\null \Tr[x] = 1, \\
		&\null \Tr[y] = 1, \\
		&\null z = \mathcal{L}(x \otimes y), \\
		&\null z \succeq 0.
	\end{split}
\end{align}

Let $\Lambda \equiv \Spec \left[ W_X \right]$, and $d \equiv \Dim\left( \mathcal{H}_X \right)$.
Now, consider the following optimization:
\begin{align}
	\label{eq:Opt2}
	\begin{split}
		\text{maximize } &\null \mathcal{T}(z), \\
		\text{over } &\null \xi \in \DPS\left( \mathcal{H}_X, \mathcal{H}_{Y1}, \mathcal{H}_{Y2}, \cdots \right), \\
		\text{subject to } &\null x = \Tr_Y(\xi), \\
		&\null x \preceq \frac{2 \Lambda}{1+\Lambda d} \openone_X, \\
		&\null y = \Tr_X(\xi), \\
		&\null z = \mathcal{L}\left[ (1+\Lambda d)\xi - \Lambda d M_X \otimes y \right], \\
		&\null z \succeq 0.
	\end{split}
\end{align}

\begin{lem}
	\label{lem:Opts}
	For any $\left( W_X, \mathcal{L}, \mathcal{T} \right)$ it holds that the solution of~\eqref{eq:Opt1} is upper bounded by the solution of~\eqref{eq:Opt2}.
\end{lem}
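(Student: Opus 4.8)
The plan is to exhibit a value-preserving map from feasible points of~\eqref{eq:Opt1} to feasible points of~\eqref{eq:Opt2}. Because~\eqref{eq:Opt2} is a maximization, producing for each admissible $(x,y,z)$ of~\eqref{eq:Opt1} an admissible $\xi$ of~\eqref{eq:Opt2} with identical objective $\mathcal{T}(z)$ immediately gives the desired inequality between the two optima.

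The only genuine obstruction is that $x \in W_X$ need not be positive semidefinite, whereas the DPS hierarchy relaxes only sets of (positive) separable states. I would absorb the indefiniteness of $x$ into a positive affine shift. Given feasible $(x,y)$ with $-\Lambda\openone \preceq x \preceq \Lambda\openone$ and $\Tr[x]=1$, set
\[ x' \;=\; \frac{x+\Lambda d\,M_X}{1+\Lambda d} \;=\; \frac{x+\Lambda\openone}{1+\Lambda d}, \]
using $\Lambda d\,M_X=\Lambda\openone$. Positivity of $x'$ follows from $x \succeq -\Lambda\openone$, and $\Tr[x']=(\Tr[x]+\Lambda d)/(1+\Lambda d)=1$, so $x'$ is a bona fide normalized state.

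I would then take $\xi = x'\otimes y$. Since $x'$ is positive on $\mathcal{H}_X$ and $y$ is a product (hence separable) positive operator on the $Y$-spaces, $\xi$ is separable across the cut $X\,|\,Y_1\,|\,Y_2\,|\,\cdots$ and therefore admits PPT symmetric extensions at every level, so $\xi \in \DPS(\mathcal{H}_X,\mathcal{H}_{Y1},\mathcal{H}_{Y2},\cdots)$. The remaining constraints of~\eqref{eq:Opt2} are then checked directly: $\Tr_Y(\xi)=x'$, and $x \preceq \Lambda\openone$ forces $x' \preceq \tfrac{2\Lambda}{1+\Lambda d}\openone_X$, matching the imposed spectral bound; moreover $\Tr_X(\xi)=y$. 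The decisive step is the un-shift hidden in the definition of $z$: since
\[ (1+\Lambda d)\xi-\Lambda d\,M_X\otimes y=\big[(1+\Lambda d)x'-\Lambda d\,M_X\big]\otimes y=x\otimes y, \]
one obtains $z=\mathcal{L}(x\otimes y)$, identical to the $z$ of~\eqref{eq:Opt1}, with $z \succeq 0$ inherited. Hence the objective $\mathcal{T}(z)$ is preserved.

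The main difficulty is thus conceptual rather than computational: recognizing that the correction term $-\Lambda d\,M_X\otimes y$ placed inside $z$ is precisely what undoes the positive shift used to make $\xi$ an admissible DPS state, so that the indefinite optimization over $W_X$ embeds faithfully into a positive relaxation at the cost of only the explicit bound $2\Lambda/(1+\Lambda d)$. Once this exact cancellation is spotted, every verification above reduces to a one-line computation.
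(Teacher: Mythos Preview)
Your proposal is correct and takes essentially the same approach as the paper: both rely on the affine shift $x \mapsto (x+\Lambda\,\openone_X)/(1+\Lambda d)$ to embed the indefinite operator $x$ into a positive separable (hence DPS) state, with the correction term $-\Lambda d\,M_X\otimes y$ exactly undoing the shift inside $z$. Your presentation as an explicit value-preserving map from feasible points of~\eqref{eq:Opt1} to feasible points of~\eqref{eq:Opt2} is arguably more transparent than the paper's chain of set inclusions, but the underlying idea is identical.
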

\begin{proof}
	We first show the following inclusion:
	\begin{equation}
		\begin{aligned}
			&\left\{ x \in W_X : \Tr[x] = 1 \right\} \subseteq \\
			&\left\{ x \in \Op\left(\mathcal{H}_X\right) : \Eig[x] \in [-\Lambda, \Lambda], \Tr[x] = 1 \right\} = \\
			&\left\{ x - \Lambda \openone_X : x \in \Op\left(\mathcal{H}_X\right), \Eig[x] \in [0, 2 \Lambda], \Tr[x] = 1+\Lambda d \right\} \\
			&\subseteq \tilde{W}_X \equiv \left\{ (1+\Lambda d)x - \Lambda d M_X : x \in \PSD\left(\mathcal{H}_X\right), \right. \\
			& \qquad \qquad \qquad \left. x \preceq 2 \Lambda/(1+\Lambda d) \openone_X, \Tr[x] \leq 1 \right\}.
			%& \qquad \qquad \qquad \left. \Eig[x] \in \left[0, {2 \Lambda}/(1+\Lambda d)\right], \Tr[x] \leq 1 \right\}.
		\end{aligned}
	\end{equation}
	The first inclusion follows from replacing the subset $W_X$ of operators on $\mathcal{H}_X$ bounded by $\Lambda$ with the set of all operators on $\mathcal{H}_X$ bounded by $\Lambda$.
	The equality follows from the shift of spectra by $\Lambda$ that is equivalent to adding the operator $\Lambda \openone_X$.
	The second inclusion follows replacement of $x$ with $\frac{1}{1+\Lambda d} x$, rewriting eigenvalue boundaries in matrix inequality form and relaxing the constraint $\Tr[X] = 1$.
	
	In this proof we will use the following abbreviation: $\PSD_Y \equiv \PSD\left( \mathcal{H}_{Y1}, \mathcal{H}_{Y2}, \cdots \right)$.
	
	It is easy to see that
	\begin{equation}
		\label{eq:xyBeforeRelax}
		\begin{aligned}
			&\left\{ x \otimes y : x \in W_X, y \in \PSD_Y, \Tr[x] = 1, \Tr[y] = 1 \right\} \subseteq \\
			&\left\{ x \otimes y : x \in \tilde{W}_X, y \in \PSD_Y, \Tr[x] = 1, \Tr[y] = 1 \right\} = \\
			&\left\{ \left[ (1+\Lambda d) x - \Lambda d M_X \right] \otimes y : x \in \PSD\left(\mathcal{H}_X\right), y \in \PSD_Y, \right. \\
			& \qquad \qquad \qquad \left. \Tr[x] = 1, \Tr[y] = 1, x \preceq 2 \Lambda/(1+\Lambda d) \openone_X \right\},
		\end{aligned}
	\end{equation}
	where the inclusion follows from $W_X \subseteq \tilde{W}_X$, and equality from direct writing the definition of $\tilde{W}_X$.
	
	The last expression of the set can also be rewritten as:
	\begin{equation}
		\label{eq:longSetXi}
		\begin{aligned}
			&\left\{ (1+\Lambda d) \xi - \Lambda d M_X \otimes \Tr_X[\xi] : \xi = x \otimes y, \right. \\
			&\qquad x \in \PSD\left(\mathcal{H}_X\right), y \in \PSD_Y, \\
			&\qquad \left. \Tr[x] = 1, \Tr[y] = 1, x \preceq 2 \Lambda/(1+\Lambda d) \openone_X \right\}.
		\end{aligned}
	\end{equation}
	
	Next, we use the fact that $\DPS\left( \mathcal{H}_X, \mathcal{H}_{Y1}, \mathcal{H}_{Y2}, \cdots \right)$ is a relaxation of the set of product states, i.e. it contains the set $\left\{ \xi \in \PSD\left(\mathcal{H}_X\right) \otimes \PSD_Y : \Tr[\xi] \leq 1 \right\}$. Thus we get that the set given in~\eqref{eq:longSetXi} is contained in
	\begin{equation}
		\label{eq:finalSetXi}
		\begin{aligned}
			&\left\{ (1+\Lambda d) \xi - \Lambda d M_X \otimes \Tr_X[\xi] : \xi \in \DPS\left(\mathcal{H}_X, \mathcal{H}_Y\right), \right. \\
			&\qquad \left. \Tr_Y[\xi] \preceq 2 \Lambda/(1+\Lambda d) \openone_X \right\}.
		\end{aligned}
	\end{equation}

	Direct comparison of the definition of the first set in~\eqref{eq:xyBeforeRelax} with the constraints in~\eqref{eq:Opt1}, and the definition of the set~\eqref{eq:finalSetXi} with the constraints in~\eqref{eq:Opt2} concludes the proof.
\end{proof}

%The above considerations remain valid if instead of optimizing in~\eqref{eq:Opt1} over $y \in \PSD\left(\mathcal{H}_Y\right)$ one optimizes over
%\begin{equation}
%	y \in \PSD\left(\mathcal{H}_{Y1}\right) \otimes \PSD\left(\mathcal{H}_{Y2}\right) \otimes \cdots,
%\end{equation}
%with $\mathcal{H}_{Y} = \otimes_i \mathcal{H}_{Yi}$ and replaces in~\eqref{eq:Opt2} optimization over $\xi \in \DPS\left(\mathcal{H}_X, \mathcal{H}_Y\right)$ with optimization over
%\begin{equation}
%	\xi \in \DPS\left(\mathcal{H}_X, \mathcal{H}_{Y1}, \mathcal{H}_{Y2}, \cdots\right).
%\end{equation}

\section{Hierarchy of outer semi-definite programming approximations of Hybrid no-signaling-quantum correlations}
\label{sec:Hierarchy}

In order to model the set of boxes in a hybrid scenario, we identify the Hilbert space of the operator $W$ from the Theorem 1 (in the main paper) with $\mathcal{H}_X$, spaces of operators $M^{(i)}_{a_i|x_i}\in B(\mathbb{C}^d)$ with $\mathcal{H}_{Y(a_i,x_i)}$, and the spaces of operators $M_{a_{n+1}|x_{n+1}}$ from the Definition 2 (in the main paper) with $\mathcal{H}_{Y(a_{n+1},x_{n+1})}$. We assume all the measurement operator to be projective with trace $1$.

Let us now define the operator $\mathcal{L}$ of~\eqref{eq:L} as the operator that transforms the state $W$ and the measurements $M^{(i)}_{a_i|x_i}\in B(\mathbb{C}^d)$, $i = 1,\cdots,n+1$, to the sequence
\begin{equation}
	\label{eq:SigmaMs}
	\left( \sigma_{\textbf{a}_n|\textbf{x}_n} \otimes M_{a_{n+1}|x_{n+1}} \right).
\end{equation}
This operator can be constructed using the SWAP trick~\eqref{eq:Werner89}.

Next, let $\mathcal{P}$ be a transformation from~\eqref{eq:SigmaMs} to probabilities $p(\textbf{a}_{n+1}|\textbf{x}_{n+1})=\mathrm{Tr}(M_{a_{n+1}|x_{n+1}}\sigma_{\textbf{a}_n|\textbf{x}_n})$ (cf.~Definition 2). Let $\mathcal{B}$ be a Bell operator over those probabilities and let
\begin{equation}
	\mathcal{T} \equiv \mathcal{B} \circ \mathcal{P},
\end{equation}
i.e. a transformation from~\eqref{eq:SigmaMs} to the Bell operator value, cf.~\eqref{eq:T}.

Now, if we provide some bound $\Lambda$ on the spectrum of the state $W$, we may use the Lemma~\ref{lem:Opts} to perform SDP optimization of the Bell operator $\mathcal{B}$ over hybrid boxes. The levels of the hierarchy are specified by the bound $\Lambda$ of $W$ and the level of the DPS hierarchy used.

\section{The case with two no-signaling and one quantum party}
\label{sec:app1q2ns}

We illustrate the results of Appendix~\ref{sec:Hierarchy} for the case with two no-signaling and one quantum party. For the sake of simplicity we denote in this section the no-signaling parties by $A$ and $B$, and the quantum party by $C$. Moreover, we assume all the parties use binary settings and outcomes - in this case we may use the previous discussion for outer approximation of $\textbf{HNSQ}(2+1,2,2)$, since according to Remark 4 in the main paper the set of hybrid no-signaling-quantum correlations is a convex hull of correlations obtained by projective measurements on no-signaling assemblages acting only on 2-dimensional space.

As shown above in the construction of operator $W$ from Theorem 1 (see Appendix \ref{appA} and Appendix \ref{appB}) for the 3-partite scenario in the dimension $2$ with no-signaling $A$ and $B$, and quantum $C$, in fact for measurements only optimization over second measurement on $C$ is necessary, since the rest of them is taken in either computational or Hadamard basis, and thus in~\eqref{eq:WM} we may assume that only one additional subspace $C''$ is needed.

Indeed, let $M^{a|x}_{A'}$ and $M^{b|y}_{B'}$ be measurements in computational ($x,y=0$) and Hadamard basis ($x,y=1$), for Alice and Boba, that are sufficinet as show in Theorem 1. Using~\eqref{eq:Werner89} the Charlie's measurements can be expressed using:
\begin{subequations}
	\label{eq:measurements}
	\begin{equation}
		\label{eq:A6a}
		M^{c=0|z=0}_{C' C''} \equiv \SWAP(C', C''),
	\end{equation}
	\begin{equation}
		\label{eq:A6b}
		M^{c=0|z=1}_{C' C''} \equiv \kb{0}{0}_{C'} \otimes \openone_{C''},
	\end{equation}
\end{subequations}
and for $c=1$ obtained by summing to identity.

Note that the construction~\eqref{eq:A6a} works for any number of parties and outcomes, whereas~\eqref{eq:A6b} works only in this particular case.

%The DPS hierarchy with non-negative values of $\lambda_i$ perfectly fits to the needs when only quantum theory is considered. On the other hand, 
From the bound provided according to construction given in Theorem 1 (see Appendix~\ref{appB}) with a choice of basis dual to $\left\{\openone, |0\rangle \langle 0|, |+\rangle \langle +| \right\}$ it follows that in order to cover all cases allowed by theory in this case we may bound the eigenvalues of $W$ by $\Lambda = 11$, as $\left\|W\right\|\leq (f(\theta_{+}))^2\leq 11$.
%Below, we show how to overcome this difficulty.

%The theorem from sec.~\ref{sec:AppB} allows to consider any scenario, with aritrary number of parties with any dimension, but now we will explicitely fix to two non-signalling and one quantum party in the dimension $2$.

Let us now consider the following state where on space $ABC$ there is an additional maximally mixed stated:
\begin{equation}
	\label{eq:EWM}
	\begin{aligned}
		&\Xi_{ABCA'B'C'C''} \equiv \frac{1}{8} \openone_{ABC} \otimes \xi_{A'B'C'C''} = \\
		&\sum_i \lambda_i \left[ \frac{1}{8} \openone_{ABC} \otimes \kb{\phi_i}{\phi_i}_{A'B'C'} \tilde{\otimes} \kb{\psi_i}{\psi_i}_{C''} \right],
	\end{aligned}
\end{equation}
where $\sum_i \lambda_i = 1$, $\lambda_i \geq 0$.

%In order to represent an operator with eigenvalues within $[-11,11]$ we may consider an operator with eigenvalues within $[0,22]$ and in further calculations subtract $\lambda = 11$ from each eigenvalue. We may further relax the constraint of having the eigenvalues within $[0,22]$ and consider an operator positive semi-definite operator with trace equal $176 = 22 \times 8 = \Lambda d$ (where $d = 8$ is the dimension of the $2 \times 2 \times 2$ operator, so the number of its eigenvalues), or obviously equivalently a state of trace $1$ multiplied by $176$.

%Next, subtracting maximally mixed stated with coefficient $175$ assures that the final state will have trace equal $1$ and at least the full scope of eigenvalues in $[-11,11]$ will be possible to reach. These considerations applied to~\eqref{eq:EWM} lead to the following state:
Following the Lemma~\ref{lem:Opts} we get that any product of allowed operator $W$ with the state representing the second measurement for $C$ can be expressed in the following form:
\begin{equation}
	\begin{aligned}
		(1+\Lambda d)& \Tr_{ABC} \left[ \Xi_{ABCA'B'C'C''} \right] \\
		&- \Lambda d \Tr_{A'B'C'} \left[ \Xi_{ABCA'B'C'C''} \right]
	\end{aligned}
\end{equation}

Now, applying the measurements discussed above~\eqref{eq:measurements}, we see that
\begin{equation}
	\label{eq:sigmaM}
	\begin{aligned}
		\Sigma_{ab|xy} \equiv& (1+\Lambda d) \Tr_{ABCA'B'} \left[ \varXi \mu^{ab|xy}_{A'B'} \right] \\
		&- \Lambda d \Tr_{ABA'B'C'} \left[ \varXi \mu^{ab|xy}_{AB} \right],
	\end{aligned}
\end{equation}
where
\begin{subequations}
	\begin{equation}
		\mu^{ab|xy}_{A'B'} \equiv \openone_{ABC} \otimes M^{a|x}_{A'} \otimes M^{b|y}_{B'} \otimes \openone_{C' C''},
	\end{equation}
	\begin{equation}
		\mu^{ab|xy}_{AB} \equiv M^{a|x}_{A} \otimes M^{b|y}_{B} \otimes \openone_{CA'B'C'C''},
	\end{equation}
\end{subequations}
provides us the joint state of the steered states of $C$, $\sigma_{ab|xy}$, and the $C$'s measurement, i.e.
\begin{equation}
	\Sigma_{ab|xy} = \sigma_{ab|xy} \otimes \kb{\psi_i}{\psi_i}.
\end{equation}

From the main text it follows that we should also impose a constraint $\sigma_{ab|xy} \succeq 0$, or, equivalently, $\Sigma_{ab|xy} \succeq 0$. This is expressed by the constraint $z \succeq 0$ in~\eqref{eq:Opt1} and~\eqref{eq:Opt2}. From the above and~\eqref{eq:Werner89} it follows that the probabilities are given by
\begin{equation}
	p(a,b,c|x,y,z) \equiv \Tr \left[ \Sigma_{ab|xy} M^{c|z} \right],
\end{equation}
that definies the function $\mathcal{P}$ from the Appendix~\ref{sec:Hierarchy}.

We have observed that it is beneficial to impose an additional PPT constraint on~\eqref{eq:sigmaM} to strenghten the separation relaxation between the state and measurement operators. The same can be obtained directly by considering a higher level of the DPS relaxation, but our observation suggest that the direct constraining of~\eqref{eq:sigmaM} provides significant improvements at relatively low computational cost.

\end{document}